\definecolor{blueish}{HTML}{007F99}
\definecolor{purpleish}{HTML}{72177A}
\newtheorem{theorem}{Theorem}
\newtheorem{prop}[theorem]{Proposition}
\newtheorem{lemma}[theorem]{Lemma}
\newtheorem{cor}[theorem]{Corollary}
\newtheorem{defn}{Definition}
\renewcommand{\epsilon}{\varepsilon}
\renewcommand{\le}{\leqslant}
\renewcommand{\ge}{\geqslant}
\renewcommand{\leq}{\leqslant}
\renewcommand{\geq}{\geqslant}
\renewcommand{\P}{\mathbf{P}}
\newcommand{\de}{d_\mathrm{E}}
\newcommand{\remove}[1]{}
\newcommand{\HLinkShort}[2]{\hyperref[#2]{#1\ref*{#2}}}
\newcommand{\HLink}[2]{\hyperref[#2]{#1~\ref*{#2}}}
\newcommand{\HLinkPage}[2]{\hyperref[#2]{#1~\ref*{#2}%
		$_\text{p\pageref{#2}}$}}
\newcommand{\HLinkPageOnly}[1]{\hyperref[#1]{Page~\refpage*{#1}%
		$_\text{p\pageref{#1}}$}}
\newcommand{\HLinkSuffix}[3]{\hyperref[#2]{#1\ref*{#2}{#3}}}
\newcommand{\HLinkPageSuffix}[3]{\hyperref[#2]{#1\ref*{#2}%
		#3$_\text{p\pageref{#2}}$}}
\newcommand{\seclab}[1]{\label{section:#1}}
\newcommand{\secref}[1]{\HLink{Section}{section:#1}}
\newcommand{\corlab}[1]{\label{cor:#1}}
\newcommand{\corref}[1]{\HLink{Corollary}{cor:#1}}%
\providecommand{\deflab}[1]{\label{def:#1}}
\newcommand{\defref}[1]{\HLink{Definition}{def:#1}}
\newcommand{\apndlab}[1]{\label{apnd:#1}}
\newcommand{\apndref}[1]{\HLink{Appendix}{apnd:#1}}
\newcommand{\lemlab}[1]{\label{lemma:#1}}
\newcommand{\lemref}[1]{\HLink{Lemma}{lemma:#1}}%
\newcommand{\tablab}[1]{\label{table:#1}}%
\newcommand{\tabref}[1]{\HLink{Table}{table:#1}}%
\newcommand{\proplab}[1]{\label{propo:#1}}%
\newcommand{\propref}[1]{\HLink{Proposition}{propo:#1}}%
\newcommand{\propertylab}[1]{\label{property:#1}}%
\newcommand{\propertyref}[1]{\HLink{Property}{property:#1}}%
\newcommand{\thmlab}[1]{{\label{theo:#1}}}
\newcommand{\thmref}[1]{\HLink{Theorem}{theo:#1}}
\providecommand{\eqlab}[1]{}%
\renewcommand{\eqlab}[1]{\label{equation:#1}}
\newcommand{\Eqlab}[1]  {\label{equation:#1}}
\newcommand{\Eqref}[1]{\HLinkSuffix{Eq.~(}{equation:#1}{)}}
\begin{document}
	\title{Approximate Trace Reconstruction}
	\author{
	}
	\date{\today}
	
	\author{Sami Davies%
 		\thanks{%
 			{University of Washington (\url{daviess@uw.edu}).}}
 		\and Mikl\'os Z. R\'acz%
 		\thanks{%
 			{Princeton University (\url{mracz@princeton.edu}); research supported in part by NSF grant DMS 1811724 and by a Princeton SEAS Innovation Award.}}
 		\and Cyrus Rashtchian%
 		\thanks{%
 			{Dept. of Computer Science \& Engineering, University of California, San Diego (\url{crashtchian@eng.ucsd.edu})}.}
     		\and Benjamin G. Schiffer%
 		\thanks{%
 			{Princeton University (\url{bgs3@princeton.edu}).}
             } }
	
	\maketitle

\begin{abstract}%
In the usual trace reconstruction problem, the goal is to exactly reconstruct an unknown string of length $n$ after it passes through a deletion channel many times independently, producing a set of traces (i.e., random subsequences of the string). We consider the relaxed problem of approximate reconstruction. Here, the goal is to output a string that is close to the original one in {\em edit distance} while using much fewer traces than is needed for exact reconstruction. We present several algorithms that can approximately reconstruct strings that belong to certain classes, where the estimate is within $n/\mathrm{polylog}(n)$ edit distance, and where we only use $\mathrm{polylog}(n)$ traces (or sometimes just a single trace). These classes contain strings that require a linear number of traces for exact reconstruction and which are quite different from a typical random string. From a technical point of view, our algorithms approximately reconstruct consecutive substrings of the unknown string by aligning dense regions of traces and using a run of a suitable length to approximate each region. To complement our algorithms, we present a general black-box lower bound for approximate reconstruction, building on a lower bound for distinguishing between two candidate input strings in the worst case. In particular, this shows that approximating to within $n^{1/3 - \delta}$ edit distance requires $n^{1 + 3\delta/2}/\mathrm{polylog}(n)$ traces for $0< \delta < 1/3$ in the worst case. 
\end{abstract}

\allowdisplaybreaks

\section{Introduction}

In the \emph{trace reconstruction} problem, we observe noisy samples of an unknown binary string after passing it through a deletion channel several times~\cite{BatuKannan04-RandomCase,Levenshtein:2001}. For a parameter $q\in (0,1)$, the channel deletes each bit of the string with probability $q$ independently, resulting in a {\em trace}. The deletions for different traces are also independent. We only observe the concatenation of the surviving bits, without any information about the deleted bits or their locations. 

How many samples (traces) from the deletion channel does it take to exactly recover the unknown string with high probability? Despite two decades of work, this question is still wide open. 
For a worst-case string, very recent work shows that $\exp(\widetilde{O}(n^{1/5}))$ traces suffice~\cite{chase2020new}, building upon the previous best bound of 
$\exp(O(n^{1/3}))$~\cite{de2019optimal, NazarovPeres17-WorseCase}; 
furthermore, $\widetilde \Omega(n^{3/2})$ traces are necessary~\cite{Chase19, HL18}. 
Improved upper bounds are known in the {\em average-case} setting, where the unknown string is uniformly random~\cite{BatuKannan04-RandomCase,HolensteinMPW08,HoldenPemantlePeres18-RandomCase,tr-revisited, PeresZ17, viswanathan}, in the {\em coded} setting, where the string is guaranteed to reside in a pre-defined set~\cite{BLS19, CGMR, sabary2020error, srinivasavaradhan2018maximum, srinivasavaradhan2020algorithms}, and in the smoothed-analysis setting where the unknown string is perturbed before trace generation~\cite{chen2020polynomialtime}. 

Given that exact reconstruction may be challenging, we relax the requirements and ask: when is it possible to {\em approximately} reconstruct an unknown string with much less information? More precisely, the algorithm should output a string that is close to the true string under some metric. Since the channel involves deletions, we consider {\em edit distance}, measuring the minimum number of insertions, deletions, and substitutions between a pair of strings. Letting $n$ denote the length of the unknown string, we investigate the necessary and sufficient number of traces to approximate the string up to $\varepsilon n$ edit distance. We call this the {\em $\varepsilon n$-approximate reconstruction} problem.

Trace reconstruction has received much recent attention because of DNA data storage, where reconstruction algorithms are used to recover the data~\cite{OAC,church2012next,bhardwaj2020,goldman,yazdi2017portable, lopez2019dna}. Biochemical advances have made it possible to store digital data using synthetic DNA molecules with higher information density than electromagnetic devices. 
During the data retrieval process, each DNA molecule is imperfectly replicated several times, leading to a set of noisy strings that contain insertion, substitution, and deletion errors.
Error-correcting codes are used to deal with missing data, and hence, an approximate reconstruction algorithm would be practically useful. Decreasing the number of traces would reduce the time and cost of data retrieval. 

For any deletion probability $q$, a single trace achieves a $qn$-approximation in expectation. On the other hand, if $q = 1/2$, then it is not clear whether $(n/100)$-approximate reconstruction requires asymptotically fewer traces than exact reconstruction.
More generally, we propose the following goal: determine the smallest $\varepsilon$ such that any string can be $\varepsilon n$-approximately reconstructed with $\mathrm{poly}(n)$ traces. Here $\varepsilon$ is a parameter that may depend on $n$ and $q$. Although we focus on an information-theoretic formulation (measuring the number of traces), the reconstruction algorithm should also be computationally efficient (polynomial time in $n$ and the number of traces). 

A natural approach would be to transform exact reconstruction methods into more efficient approximation algorithms. Unfortunately, revising these algorithms to allow some incorrect bits seems nontrivial or perhaps impossible. For example, certain algorithms assume that the string has been perfectly recovered up to some point, and they use this to align the traces and determine the next bit~\cite{BatuKannan04-RandomCase,HolensteinMPW08,HoldenPemantlePeres18-RandomCase}. Another technique involves computing the bit-wise average of the traces and outputing the string that most closely matches the average. These {\em mean-based} statistics suffice to distinguish any pair of strings, but only if there are $\exp(\Omega(n^{1/3}))$ traces~\cite{de2019optimal, NazarovPeres17-WorseCase}. Also, the maximum likelihood solution is poorly understood for the deletion channel, and current analyses are limited to a small number of traces~\cite{mitzenmacher-survey,sabary2020error,srinivasavaradhan2018maximum, srinivasavaradhan2020algorithms}

Designing algorithms to find approximate solutions may in fact require fundamentally different methods than exact reconstruction. For a simple supporting example, consider the family of strings containing all ones except for a single zero that lies in some position between $n/3$ and $2n/3$, e.g.,  
$111\cdots 11 0 11\cdots 111$.
Determining the exact position of the zero requires $\Omega(n)$ traces when the deletion probability is a constant~\cite{BatuKannan04-RandomCase,tr-revisited}. However if the string comes from this family, we can output the all ones vector and achieve an approximation to within Hamming distance one. 

As a starting point, we consider classes of strings defined by run-length assumptions. For instance, if the 1-runs are sufficiently long and the zero runs are either short or long, we can $\epsilon n$-approximately reconstruct the string using $O(\log(n)/\epsilon^2)$ traces.  
We then strengthen our upper bound to work even when the string can be partitioned into regions that are either locally dense or sparse. 
Finally, we prove new lower bounds on the necessary trace complexity; for example, approximating arbitrary strings to within $n^{1/3 - \delta}$ edit distance requires $n^{1 + 3\delta/2}/\mathrm{polylog}(n)$ traces for any constant $0< \delta < 1/3$.


\subsection{Related work}\seclab{Related}

The trace reconstruction problem was introduced to the theoretical computer science community by Batu, Kannan, Khanna, and McGregor~\cite{BatuKannan04-RandomCase}. 
There is an exponential gap between the known upper and lower bounds for the number of traces needed to reconstruct an arbitrary string with constant deletion rate~\cite{Chase19,de2019optimal,chase2020new,HL18, NazarovPeres17-WorseCase}. The main open theoretical question is whether a polynomial number of traces suffice. There has also been experimental and theoretical work on maximum likelihood decoding, where approximation algorithms have been developed for average-case strings given a constant number of traces~\cite{sabary2020error,srinivasavaradhan2018maximum,srinivasavaradhan2020algorithms}.~Holden, Pemantle, and Peres show that $\exp(O (\log^{1/3} n))$ traces suffice for reconstructing a random string, building on previous work~\cite{BatuKannan04-RandomCase,HolensteinMPW08, HoldenPemantlePeres18-RandomCase, PeresZ17, viswanathan}.
Chase extended work by Holden and Lyons to show that $\widetilde{\Omega}(\log^{5/2} n)$ traces are necessary for random strings~\cite{Chase19, HL18}. 

A related question to ours is to understand the limitations of known techniques for distinguishing pairs of strings that are close in edit distance. Grigorescu, Sudan, and Zhu show that there exist pairs that cannot be distinguished with a small number of traces when using a mean-based algorithm~\cite{grigorescu2020limitations}. They further identify strings that are separated in edit distance, yet can be exactly reconstructed with few traces. Their results are incomparable to ours because the sets of strings they consider are different. Instead of considering local density assumptions, they consider local agreement up to single-bit shifts. Their algorithm uses a subexponential number of traces only when the edit distance separation is at most $o(\sqrt{n})$.

Many other variants of trace reconstruction have been studied as stand-alone problems, united by the goal of broadening our understanding of reconstruction problems. 
Krishnamurthy, Mazumdar, McGregor, and Pal consider matrices (rows/columns deleted) and sparse strings~\cite{KMMP19}. 
Davies, R\'acz, and Rashtchian consider labeled trees,
where the additional structure of some trees leads to more efficient reconstruction~\cite{DRR19}. Circular trace reconstruction considers strings and traces up to circular rotations of the bits~\cite{narayanan2020circular}.
Population recovery reconstructs multiple unknown strings simultaneously~\cite{Ban, BCSS19, narayanan2020population}. Going beyond i.i.d.~deletions, algorithms have also been developed for position- or character- dependent error rates~\cite{HartungHP18}, or for ancestral state reconstruction, where deletions are based on a Markov chain~\cite{andoni2012global}. 
It should not go without mention that forms of approximate trace reconstruction have been studied in more applied frameworks; in particular Srinivasavaradhan, Du, Diggavi, and Fragouli study heuristics for reconstructing approximately given one or two traces~\cite{srinivasavaradhan2018maximum}.

\paragraph{Comparison to Coded Trace Reconstruction.} 
Cheraghchi, Gabrys, Milenkovic, and Ribeiro explore coded trace reconstruction, where the unknown string is assumed to come from a code, and they show that codewords can be reconstructed with high probability using much fewer traces than average-case reconstruction~\cite{CGMR} (see also~\cite{drinea2007improved, Levenshtein:2001, mitzenmacher-survey}).
Brakensiek, Li, and Spang extend this work and present codes with rate $1- \gamma$ that can be reconstructed
using $\exp(O(\log^{1/3}(1/\gamma)))$ traces~\cite{BLS19}.
Improved coded reconstruction results are known when the number of errors in a trace is a fixed constant~\cite{abroshan2019coding, chrisnata2020optimal, haeupler2014repeated, kiah2020coding, sabary2020error}.

An existing approach for coded trace reconstruction does use approximation as an intermediate step, where the original string can be recovered after error correction~\cite{BLS19}. Our focus is  different, and our results are incomparable to those from coded trace reconstruction. We investigate classes of strings that are very different from codes (e.g., pairs of strings in our classes can be very close). We also consider strings that require $\Omega(n)$ traces to exactly reconstruct, whereas the work on coded trace reconstruction shows that their classes of strings can be exactly reconstructed with a sublinear number of traces. Overall, we do not aim to optimize the ``rate'' of our classes of strings. Instead, our main contribution is the effectiveness of new algorithmic techniques and local approximation methods, including novel alignment ideas and the use of runs in approximating edit distance.

Additionally, coded trace reconstruction lower bounds can be used as a black box to obtain lower bounds for approximate trace reconstruction by constructing a code that is an $\varepsilon n$-net~\cite{BLS19}. However, these lower bounds reduce to results on average-case reconstruction, and hence, this approach currently leads to lower bounds for approximate reconstruction that are exponentially smaller than what we prove.

\subsection{Our results}

We assume that the deletion probability $q$ is a fixed constant and $p := 1-q$ is the retention probability.   In our statements, $C,C',C'',C_1,C_2,\ldots$ denote constants, and $O(\cdot)$ hides constants, where these may depend on $p, q$. 
Unless stated otherwise, $\log(\cdot)$ has base $1/q$. The phrase \emph{with high probability} means probability at least $1-O(1/n)$.  A {\em run} in a string is a substring of consecutive bits of the same value, and we often refer specifically to $0$-runs and $1$-runs. We use bold $\mathbf{r}$ to denote runs, or more generally substrings, and let $|\mathbf{r}|$ denote its length (number of bits). \tabref{results} summarizes our results, and we restate the theorems in the relevant sections for the reader's convenience.

\subsubsection*{Algorithms for approximate reconstruction}

Our results exhibit the ability to approximately reconstruct strings based on various run-length or density assumptions. For these classes of strings, we develop new polynomial-time, alignment-based algorithms, and we show that $O(\log (n) / \epsilon^2)$ traces suffice. 
We assume that the algorithms know $n$, $q$, $\epsilon$, and the class that the unknown string comes from, though the last assumption is not necessary for \thmref{gapone}. 
We also provide warm-up examples (see \propref{longruns} and \propref{oneruns} in \secref{warm-up}), which may be helpful to the reader before diving into the algorithms in \secref{upper-bounds}. 


Our first theorem only requires $1$-runs to be long, while the length of the $0$-runs is more flexible; they can be either long or short, assuming there is a gap. 

\begin{restatable}[]{theorem}{gapone}
    \thmlab{gapone}
    Let $X$ be a string on $n$ bits such that all of its $1$-runs have length at least $C' \log(n) / \varepsilon$ and 
    none of its $0$-runs have length between 
        $C' \log(n)$ and 
    $3C'\log(n)$. 
    	There exists some constant $C$ such that if $C' \geq C$, then 
     $X$ can be $\varepsilon n$-approximately reconstructed with $O(\log(n) / \varepsilon^2)$ traces. 
\end{restatable}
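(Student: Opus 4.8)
The plan is to approximately reconstruct $X$ by processing the traces left-to-right, carving $X$ into consecutive blocks, each of which is either a long $1$-run or a "chunk" lying between two long $1$-runs, and then estimating the length of each block from the traces well enough that the total accumulated error is $O(\varepsilon n)$. The key structural fact I would exploit is the dichotomy on the $0$-runs: since no $0$-run has length in $[C'\log n, 3C'\log n]$, every $0$-run is either \emph{short} (length $< C'\log n$) or \emph{long} (length $> 3C'\log n$). Long $1$-runs and long $0$-runs will survive into essentially every trace as clearly identifiable long runs (after a Chernoff bound, a run of length $\ell \ge C'\log n$ in $X$ appears in a given trace with length $(1\pm o(1))p\ell$, with high probability over all $\approx n$ runs and all $\mathrm{polylog}(n)$ traces simultaneously, provided $C$ is large enough relative to $1/p$). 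The short $0$-runs are the troublesome part: they may disappear entirely or merge adjacent $1$-runs in a trace, so I cannot hope to locate them individually.

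\smallskip
\noindent\textbf{Step 1: Identify landmarks.} In each trace, mark every maximal run of length at least, say, $\tfrac{p}{2}C'\log n$ — call these the \emph{heavy runs}. With high probability, the heavy $1$-runs in the traces are exactly the images of the long $1$-runs of $X$ (the gap assumption and Chernoff keep short contributions from crossing the threshold, and long $1$-runs always survive long). Heavy $0$-runs in a trace correspond to long $0$-runs of $X$. So the $i$-th heavy $1$-run across different traces can be aligned: it is "the same" $1$-run $\mathbf{r}_i$ of $X$. This gives a common segmentation of every trace into: heavy $1$-runs $\mathbf{r}_1,\dots,\mathbf{r}_t$ (in order), separated by \emph{gap regions} $g_0,g_1,\dots,g_t$, where each gap region of $X$ consists of some long $0$-runs, some short $0$-runs, and some \emph{short} $1$-runs (short $1$-runs cannot occur by hypothesis — wait: all $1$-runs are long), so in fact each gap region of $X$ is a single $0$-run, either long or short. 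That simplification is the point of the hypothesis: $\mathbf{r}_1 0^{a_1} \mathbf{r}_2 0^{a_2}\cdots$, and each $0^{a_j}$ is either long or short, hence each gap region contains exactly one $0$-run.

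\smallskip
\noindent\textbf{Step 2: Estimate each block length.} For a long $1$-run $\mathbf{r}_i$ of true length $\ell_i$, average its surviving length over traces and divide by $p$: standard concentration gives an estimate $\widehat{\ell_i}$ with $|\widehat{\ell_i}-\ell_i| \le \varepsilon \ell_i/10$ (say) with high probability, using $O(\log n/\varepsilon^2)$ traces — this is where the bound on the number of traces enters. For a gap region $g_j = 0^{a_j}$: if it is long, its image is a heavy $0$-run and we estimate $a_j$ the same way; if it is short, we simply output \emph{nothing} (a $0$-run of length $0$), incurring error $a_j < C'\log n$. Crucially, we must detect which case we are in; a gap region is "long" iff it shows up as a heavy $0$-run in (a majority of) the traces, and the gap hypothesis guarantees this test is unambiguous.

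\smallskip
\noindent\textbf{Step 3: Bound the total error.} Output $\widehat X = \mathbf{r}_1' 0^{\widehat a_1}\mathbf{r}_2' 0^{\widehat a_2}\cdots$ where $\mathbf{r}_i'$ is a $1$-run of the estimated length. Edit distance is subadditive over the block decomposition, so $\de(X,\widehat X) \le \sum_i |\widehat\ell_i - \ell_i| + \sum_{j:\,\text{long}} |\widehat a_j - a_j| + \sum_{j:\,\text{short}} a_j$. The first two sums are each at most $\tfrac{\varepsilon}{10}\cdot(\text{total length of long runs}) \le \varepsilon n/10$. For the last sum: each short $0$-run contributes at most $C'\log n$, and each is immediately preceded by a distinct long $1$-run of length $\ge C'\log n/\varepsilon$, so the number of short $0$-runs is at most $n\varepsilon/(C'\log n)$, making the last sum at most $n\varepsilon/(C'\log n)\cdot C'\log n = \varepsilon n$. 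Adjusting constants, $\de(X,\widehat X)\le \varepsilon n$.

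\smallskip
\noindent\textbf{The main obstacle} is Step 1 — proving that the heavy-run landmarks are consistent across traces, i.e.\ that no short $0$-run ever causes two long $1$-runs to merge into one heavy run in some trace, and conversely that every long $1$-run always survives as heavy, \emph{simultaneously} over all $\Theta(n)$ runs and all $\Theta(\log n/\varepsilon^2)$ traces. Merging is not actually a problem for correctness (a short $0$-run disappearing just means the adjacent $1$-runs fuse, and we were going to output them as runs of summed length anyway, contributing the short-$0$-run length to the error exactly as budgeted), but it \emph{is} a problem for \emph{aligning} traces to each other — the heavy-run count could differ between traces. The fix is to note that two consecutive long $1$-runs are separated either by a short $0$-run or a long $0$-run; in the long case the separating $0$-run always survives (Chernoff) so no merge; in the short case a merge may happen, but then in \emph{every} trace whether or not the merge occurs, we can still recover the total $1$-mass between consecutive \emph{long} $0$-runs, and we re-segment the traces by heavy $0$-runs (which are consistent) rather than heavy $1$-runs. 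Working at the granularity of "maximal stretches between consecutive long $0$-runs" makes the alignment robust, at the cost of a slightly more careful error accounting within each such stretch; I expect this bookkeeping, plus the union bound setting $C$ large enough, to be the technical heart of the argument.
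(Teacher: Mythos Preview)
Your proposal is essentially correct and, after the fix you describe in the final paragraph, coincides with the paper's approach. The paper aligns traces by the long $0$-runs from the outset (not by the long $1$-runs), precisely because short $0$-runs can vanish and merge adjacent $1$-runs across traces---the obstacle you identify yourself. It then outputs, between consecutive long $0$-runs, a single $1$-run whose length is the averaged trace-length of that stretch scaled by $1/p$; the error on each such stretch is at most $2\varepsilon$ times its length since the stretch has $1$-density at least $1-\varepsilon$ (it alternates long $1$-runs of length $\ge C'\log(n)/\varepsilon$ with short $0$-runs of length $< C'\log n$). This is exactly your Step~3 accounting, reorganized for the post-fix segmentation. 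In short: your detour through heavy-$1$-run alignment is unnecessary, but you correctly diagnose why and land on the paper's algorithm.
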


\begin{table}[t] 	\renewcommand{\arraystretch}{1.2} \footnotesize
	\centering 
	\caption{Table of sample complexity bounds for $\varepsilon n$-approximate reconstruction.}
	\begin{tabular}{l c l} 
		\toprule 
		Classes of strings & \# samples $\varepsilon n$-approx. & Reference \\
		\midrule
		All runs have length $\geq 5 \log(n)$ & $O( \log(n) / \varepsilon^2)$ & \propref{longruns} \& \corref{longrunsrobust}  \\ 
		The $1$-runs have length $\geq C' \log(n) / \varepsilon^2 $ & 1  & \propref{oneruns} \\
		Long $1$-runs; either long or short $0$-runs & $O( \log(n) / \varepsilon^2)$ & \thmref{gapone} \& \thmref{gaponerobust} \\
		Intervals length $\geq C' \log(n)/\varepsilon^2$, density $\geq 1 - \frac{\varepsilon}{12}$  & 1 & \thmref{singletracetwo} \\
			\midrule
			Arbitrary strings, $n^{1/3 - \delta}$  edit distance, $\delta \in (0,\frac{1}{3})$  &
			   $\widetilde \Omega( n^{1 + 3\delta/2})$ & \thmref{blackboxlowerbound} \& \corref{LBCorollary} \\
		\bottomrule
	\end{tabular}
	\label{table: results}\tablab{results}
\end{table}
\noindent

The following theorem extends \thmref{gapone} to a wider class of strings by allowing many of the bits in the runs to be arbitrarily flipped to the opposite value.

\begin{restatable}[]{theorem}{gaponerobust}
    \thmlab{gaponerobust} 
    Suppose that $p > 3\epsilon$. 
    Let $Y$ be a string on $n$ bits such that all of its $1$-runs have length at least $ C' \log(n) / \varepsilon$ and none of its $0$-runs have length between $C' \log(n)$ and $3C' \log(n)$. 
    Suppose that $X$ is formed from~$Y$ by modifying at most $\epsilon C' \log(n)$ arbitrary bits in each run of $Y$. 
    If $C' \geq 1000/p$, then $X$ can be $\epsilon n$-approximately reconstructed with $O(\log(n) / \varepsilon^2)$ traces.
\end{restatable}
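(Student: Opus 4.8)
The plan is to \emph{reduce to \thmref{gapone}}: run essentially the same alignment-based procedure, and control the extra error caused by the adversarial bit-flips through a coupling argument and the triangle inequality for edit distance. Let $\mathcal{A}$ be the reconstruction procedure from the proof of \thmref{gapone} — which scans each trace for long dense windows, aligns these windows across traces, and emits, for each aligned dense region, a $1$-run whose length is estimated from the region's length together with the retention probability $p$, separating consecutive dense regions by $0$-runs of correspondingly estimated length — run on the $O(\log(n)/\varepsilon^2)$ observed traces of $X$ with its internal accuracy parameter set to a small constant fraction $\eta$ of $\varepsilon$ (the hypothesis $C'\geq 1000/p$ guarantees that the resulting $1$-run–length threshold $C'\log(n)/\eta$ is still met by $Y$). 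I would couple each trace of $X$ with a trace of $Y$ drawn with the \emph{same} deletion pattern, so that the two coupled traces have equal length and disagree exactly at the surviving preimages of the flipped coordinates. Writing $\widehat X=\mathcal{A}(X\text{-traces})$, the triangle inequality gives $\de(\widehat X,X)\leq \de(\widehat X,Y)+\de(Y,X)$, and it suffices to bound each summand by roughly $\varepsilon n/2$.

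The term $\de(Y,X)$ is a pure counting estimate. Since every $1$-run of $Y$ has length at least $C'\log(n)/\varepsilon$, the string $Y$ contains at most $\varepsilon n/(C'\log n)$ one-runs and hence $O(\varepsilon n/(C'\log n))$ runs in total; flipping at most $\varepsilon C'\log(n)$ bits per run therefore changes at most $O(\varepsilon^2 n)$ bits, so $\de(Y,X)\leq \dh(Y,X)=O(\varepsilon^2 n)$, which is negligible relative to $\varepsilon n$.

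The crux is the bound on $\de(\widehat X,Y)$, i.e., showing that $\mathcal{A}$ is \emph{robust}: run on traces of $X$, it still outputs a string within $O(\varepsilon^2 n)+\eta n$ edit distance of the reconstruction of $Y$ it would have produced on traces of $Y$. This requires re-auditing each step of the algorithm under the perturbation. The points I would verify are: (i) a window of a trace lying over the image of a long $1$-run of $Y$ still has $1$-density at least $1-\varepsilon^2$ up to Chernoff fluctuation, since only an $\varepsilon^2$-fraction of that run was flipped, so it still passes the ``dense'' test; (ii) a window lying over the image of a long $0$-run of $Y$ (length at least $3C'\log n$) has $1$-density at most $\varepsilon/3$ up to fluctuation, and since $p>3\varepsilon$ this stays well below the dense threshold, so long $0$-runs remain classified as sparse and every alignment landmark is preserved; (iii) an adversarially clustered burst of flips inside a $0$-run has length at most $p\varepsilon C'\log n$ in a trace, which is too short to be detected as a dense region (detection requires width of order $pC'\log(n)/\varepsilon$); (iv) a short $0$-run can be entirely flipped to ones and absorbed into an adjacent dense region, but full absorption forces that run to have length at most $\varepsilon C'\log n$, so the absorbed runs contribute only $O(\varepsilon^2 n)$ total edit distance; and (v) the run-length estimate for a dense region — essentially $(1/p)$ times its number of surviving $1$s — can only undershoot by the number of flips in that region, contributing $O(\varepsilon^2 n)$ in aggregate. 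Combining (i)--(v) with the guarantee of \thmref{gapone} for $Y$ gives $\de(\widehat X,Y)=O(\varepsilon^2 n)+\eta n\leq \varepsilon n/2$, and the theorem follows.

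I expect the main obstacle to be ruling out a \emph{cascade}: because the flips are adversarial within each run, a fully-flipped short $0$-run or a clustered burst could, in principle, merge or split dense regions and thereby shift the alignment of \emph{all} later regions, not just the local one. The real content is thus to show that the region decomposition $\mathcal{A}$ extracts from each trace of $X$ agrees, landmark by landmark, with the one it extracts from the coupled trace of $Y$ (up to the harmless absorption of short $0$-runs), and this is exactly where the separation of scales among the flip budget $\varepsilon C'\log n$, the minimum $1$-run length $C'\log(n)/\varepsilon$, the forbidden $0$-run window $[C'\log n,3C'\log n]$, and the bound $p>3\varepsilon$ is used. Once the decompositions are matched, the only remaining discrepancy is the benign $O(\varepsilon^2 n)$ tallied above, and the argument closes exactly as in the proof of \thmref{gapone}.
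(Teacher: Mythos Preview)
Your reduction to \thmref{gapone} has a genuine gap, and it is exactly the step you flag as ``the main obstacle.'' The procedure $\mathcal{A}$ in the proof of \thmref{gapone} is \emph{not} density-based: it declares a landmark whenever a trace contains a literal $0$-run of length at least $L=2C'p\log n$, and it aligns traces by matching the $i$th such run across all traces. This is brittle under the perturbation $Y\to X$. A single $1$ surviving inside what was a long $0$-run of $Y$ splits that run in the trace into two pieces; placing the $m=\varepsilon C'\log n$ allowed flips roughly evenly inside a $0$-run of length $3C'\log n$ chops it into pieces of length $O(1/\varepsilon)$, all of which fall below $L$ in the trace. So the landmark count $k_j$ computed from a trace of $X$ can differ arbitrarily from the one computed from the coupled trace of $Y$, and the cascade you worry about really occurs. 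Your points (i)--(v) are phrased as if $\mathcal{A}$ tests windows for $1$-density and emits a $1$-run per dense window; that is not what the algorithm of \thmref{gapone} does, and no amount of coupling salvages a run-length threshold test against adversarial interspersed bits.

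The paper's proof does \emph{not} black-box \thmref{gapone}; it replaces the landmark-detection step by a new statistic $S(\widetilde X,\ell)$ that counts the $0$s between the $(m{+}1)$st $1$ to the left and the $(m{+}1)$st $1$ to the right of a candidate index $\ell$. Because at most $m$ ones were inserted into any $0$-run of $Y$, this window provably straddles the entire image of that run (this is the content of Property~1), so $S$ is insensitive to how the adversary distributes the flips. The gap condition and $p>3\varepsilon$ then separate the values of $S$ for long versus short $0$-runs of $Y$ (Property~3), after which alignment and averaging proceed as in \thmref{gapone}. Your points (i)--(v) are morally the right kind of checklist, but they only become a proof once you have actually \emph{defined} a density-robust detector of this sort; you cannot import one from \thmref{gapone}.
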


For the final class, we consider a slightly different relaxation of having long runs. We impose a local density or sparsity constraint on contiguous intervals. If this holds, then a single trace suffices.

\begin{restatable}{theorem}{singletracetwo}
	\thmlab{singletracetwo}
	There exists some constant $C$ such that for $C' \geq C$, 
	if $X$ can be divided into contiguous intervals $I_1,\ldots,I_m$ with all $I_i$ having length at least
	$C' \log(n) / \varepsilon^2$ and density at least $1-\frac{\epsilon}{12}$ of  $0$s or $1$s,
	then $X$ can be $\epsilon n$-approximately reconstructed with a single trace in polynomial time.
\end{restatable}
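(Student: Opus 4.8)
\textbf{Proof proposal for \thmref{singletracetwo}.}
The plan is to take a single trace $\tilde X$ and post-process it into an estimate $\hat X$ by working interval-by-interval. Suppose $X = I_1 I_2 \cdots I_m$ where each $I_i$ has length $\ell_i \ge C'\log(n)/\varepsilon^2$ and is $(1-\varepsilon/12)$-dense in one bit value $b_i \in \{0,1\}$. First I would fix a target decomposition of the trace that mirrors the interval structure: since deletions are i.i.d.\ with retention probability $p$, a standard Chernoff/Bernstein concentration argument shows that with high probability, \emph{simultaneously for all $i$}, the number of surviving bits originating from $I_i$ is $p\ell_i (1 \pm \varepsilon/100)$; this uses a union bound over the at most $n$ intervals together with $\ell_i \ge C'\log(n)/\varepsilon^2$ to make each deviation probability $o(1/n^2)$ once $C'$ is a large enough absolute constant. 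Call the surviving piece of the trace coming from $I_i$ the block $\tilde I_i$, so $\tilde X = \tilde I_1 \cdots \tilde I_m$ as a concatenation (the algorithm does not know the block boundaries, but it will not need them exactly).

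Next I would argue that each trace block $\tilde I_i$ is itself close in edit distance to a run of the majority bit $b_i$ of the right length. Because $I_i$ has at most $(\varepsilon/12)\ell_i$ minority bits, the trace block $\tilde I_i$ has at most $(\varepsilon/12)\ell_i$ minority-bit survivors, so deleting those from $\tilde I_i$ yields a pure $b_i$-run; hence $d_E\big(\tilde I_i, \mathbf{r}_i\big) \le (\varepsilon/12)\ell_i$ where $\mathbf{r}_i$ is the $b_i$-run of length $|\tilde I_i|$. On the other side, $d_E\big(\mathbf{r}_i, b_i^{\ell_i}\big) = \big|\,|\tilde I_i| - \ell_i\,\big| \le (\varepsilon/100)p\ell_i + (\varepsilon/12)\ell_i + \ldots$, which after the concentration step is at most, say, $(\varepsilon/6)\ell_i$; and $d_E\big(I_i, b_i^{\ell_i}\big) \le (\varepsilon/12)\ell_i$ again by the density assumption. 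The algorithm therefore does the following: scan $\tilde X$ left to right, greedily break it into maximal alternating \emph{almost-monochromatic} blocks (e.g.\ cut whenever the local minority fraction in the current block would exceed a fixed threshold like $1/4$), and for each block output a run of its majority bit whose length is $1/p$ times the block length, rounded. I would show this greedy segmentation of $\tilde X$ recovers blocks that refine or coarsen the true $\{\tilde I_i\}$ only in a controlled way — any block the algorithm forms lies inside a single $\tilde I_i$ or spans a bounded number of them — and that the total edit-distance error is additive across blocks.

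Summing the per-interval bounds gives $d_E(X,\hat X) \le \sum_i d_E(I_i,\hat I_i) \le \sum_i \big(d_E(I_i,b_i^{\ell_i}) + d_E(b_i^{\ell_i},\mathbf{r}_i) + d_E(\mathbf{r}_i,\hat I_i)\big)$, and each term is $O(\varepsilon)\ell_i$, so the total is $O(\varepsilon)\sum_i \ell_i = O(\varepsilon n)$; rescaling $\varepsilon$ by the absorbed constant (and correspondingly enlarging $C$) gives the claimed $\varepsilon n$ bound. The estimate is produced by a single left-to-right pass plus length rescaling, so the running time is linear in the trace length, hence polynomial. The two places that need care, and which I expect to be the main obstacle, are: (i) making the greedy block-segmentation of $\tilde X$ provably align with the hidden interval structure despite the algorithm not knowing the $\ell_i$ — the danger is a long sparse stretch of minority bits at an interval boundary being split off as its own spurious block, which I would rule out by noting a boundary region contributing many minority survivors would force one of the adjacent intervals to violate its density bound, and by choosing the segmentation threshold (e.g.\ $1/4$) comfortably between the true density $\varepsilon/12$ and $1/2$; and (ii) ensuring the length-rescaling by $1/p$ is accurate, which is exactly what the uniform concentration bound in the first step provides. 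Everything else is bookkeeping of additive edit-distance errors via the triangle inequality.
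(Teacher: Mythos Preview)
Your approach differs substantially from the paper's, and the segmentation step you yourself flag as ``the main obstacle'' is a genuine gap you have not closed.

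The paper avoids boundary detection altogether. It partitions the trace $\widetilde{X}$ \emph{uniformly} into fixed-width windows of length $w = \varepsilon p L$ (with $L = C'\log(n)/\varepsilon^2$), and for each window outputs a run of length $w/p$ in the window's majority bit. The analysis is a pure counting argument: each trace interval $\widetilde{I}_i$ has at most $2\varepsilon|\widetilde{I}_i|$ minority bits (by concentration), so at most a $4\varepsilon/(1-4\varepsilon) \le 8\varepsilon$ fraction of the windows contained in $\widetilde{I}_i$ can have the wrong majority, since each wrong window consumes at least $w/2$ minority bits. Windows straddling interval boundaries are simply charged as total loss; there are at most $n/L$ boundaries, contributing at most $(w/p)(n/L) = \varepsilon n$ error. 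Nothing about the hidden interval structure needs to be recovered.

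Your greedy segmentation, by contrast, is not precisely specified (``cut whenever the local minority fraction in the current block would exceed $1/4$'' --- local over what scale, and what exactly is the current block?), and more importantly the density hypothesis constrains only the \emph{count} of minority bits in each $I_i$, not their placement. An adversary may pack all $(\varepsilon/12)\ell_i$ minority bits of $I_i$ into one contiguous run; a greedy rule of the kind you describe can then cut inside $I_i$ and emit a spurious block of the wrong bit. You assert such effects are controlled (``refine or coarsen \ldots only in a controlled way'') but give no argument; your sketch only addresses minority bits near a boundary, not clusters in the interior of an interval. A rigorous version may exist, but the paper's uniform-window trick sidesteps the whole issue and yields a much shorter proof.
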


The algorithm for \thmref{singletracetwo} extends to handle independent insertions at a rate of $O(\epsilon)$, since the proof relies on finding high density regions, which are unchanged by such insertions. 

We provide some justification for the strings considered in the above theorems. Strings that either contain long runs or that are locally dense are a natural class to examine in order to understand the advantage gained by approximate reconstruction over exact. Strings with sufficiently long runs require $\Omega(n)$ traces to reconstruct exactly, as exact reconstruction for this set involves distinguishing between our prior example strings $1^{n/2} 01^{n/2-1}$ and $1^{n/2-1} 0 1^{n/2}$, but can be approximately reconstructed with substantially less traces for large enough values of $\epsilon$. We then relax the condition that strings have long runs to the condition that strings are locally dense.
Both strings with long runs and strings that are locally dense also look very different than average-case strings (i.e., uniformly random), which have runs with length at most $2 \log n$ with high probability and can be exactly reconstructed with $O(\exp(\log^{1/3}(n)))$ traces~\cite{HoldenPemantlePeres18-RandomCase}.


\subsubsection*{Lower bounds for approximate reconstruction}
We prove lower bounds on the number of traces required for $\varepsilon n$-approximate reconstruction. We present two results, for edit distance and Hamming distance, respectively. 
The more challenging result is \thmref{blackboxlowerbound}, which shows that any algorithm that reconstructs a length $n$ arbitrary string within $\epsilon n$ edit distance requires $f(C/\epsilon)$ traces, where $f(n)$ denotes the minimum number of traces for distinguishing a pair of $n$-bit strings. Currently, $f(n) = \widetilde{\Omega}(n^{1.5})$ is the best known lower bound for exact reconstruction, which argues via a pair of strings that are hard to distinguish~\cite{Chase19}.

\begin{restatable}{theorem}{blackboxlowerbound}
\thmlab{blackboxlowerbound}
	Suppose that $f(n)$ traces are required to distinguish between two length $n$ strings $X'$ and $Y'$ with probability at least $1/2 + \alpha$, where $\alpha = 1/8$. Then there exists absolute constants $C, \epsilon^\star> 0$ such that for $\epsilon^\star \geq \epsilon \geq \log(n)/n$, any algorithm that $\epsilon n$-approximately reconstructs  arbitrary length $n$ strings with probability $1 - 1/n$ must use at least $f(C/\epsilon)$ traces.
\end{restatable}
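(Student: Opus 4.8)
The plan is to argue by contradiction: suppose we had an $\epsilon n$-approximate reconstruction algorithm $\mathcal{A}$ using fewer than $f(C/\epsilon)$ traces on length-$n$ strings, and use it to build a distinguisher for the hard pair $X', Y'$ of length $k := C/\epsilon$ that uses fewer than $f(k)$ traces — contradicting the hypothesis. The key is a \emph{padding / amplification} gadget: from the length-$k$ instance I would build a length-$n$ string by concatenating many (roughly $n/k$) copies of the unknown string $Z \in \{X', Y'\}$, possibly interleaved with fixed ``separator'' blocks, so that (i) a single trace of the length-$n$ padded string can be post-processed into (a trace-like sample of) many independent traces of the length-$k$ string $Z$, so $T$ traces of the long string yield $\approx T \cdot (n/k)$ usable traces of $Z$; and (ii) the approximate reconstruction output, being within $\epsilon n$ edit distance of the true padded string, must correctly capture \emph{most} of the $n/k$ blocks, since $\epsilon n = k \cdot (n/k) \cdot \epsilon$ and each wrong block costs $\Omega(1)$ edit distance once $k = \Theta(1/\epsilon)$ is chosen so that $X'$ and $Y'$ are at edit distance $\Omega(1) \cdot$ (well, $\Omega(k\epsilon) = \Omega(C)$) — really I want $X', Y'$ separated by more than, say, $2\epsilon n / (n/k) = 2k\epsilon = 2C$ in edit distance within each block, which forces the per-block error rate of $\mathcal{A}$'s output to be bounded, hence a majority of blocks are decoded to the correct one of $X', Y'$.

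Carrying this out in order: First, fix the pair $X', Y'$ from the hypothesis and normalize it — replace them by a padded/separated pair of some length $k_0 = \Theta(1/\epsilon)$ whose edit distance is a large constant (the separators, e.g., long runs of a fixed pattern, also serve to let the distinguisher \emph{locate} block boundaries inside a trace with high probability, since deletions only shrink separators by a predictable factor). Second, define the reduction: given $T$ traces of the length-$n$ concatenation $Z^{\otimes (n/k)}$ (with separators), and given $\mathcal{A}$'s output $\hat W$, compute an optimal edit-distance alignment of $\hat W$ to the \emph{template} with block boundaries marked, read off each block of $\hat W$, and take a plurality vote over blocks (after deciding, per block, whether it is closer to $X'$ or $Y'$). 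Third, the accounting: if $\mathcal{A}$ succeeds (probability $\geq 1 - 1/n$), the total edit distance is $\leq \epsilon n$; an averaging argument shows at most $\epsilon n / \Theta(1)$ blocks can be ``bad'' (far from both $X'$ and $Y'$ or closer to the wrong one), which is at most a small constant fraction of the $n/k$ blocks provided $\epsilon k = \Theta(C)$ is a large enough constant; so a majority of blocks vote correctly and we recover $Z$ with probability $\geq 1 - 1/n - (\text{alignment failure prob}) \geq 1/2 + \alpha$ for $\alpha = 1/8$. Fourth, count traces: this distinguisher used only $T$ traces of the long string, so if $T < f(C/\epsilon) = f(k)$ we contradict the definition of $f(k)$; hence $T \geq f(C/\epsilon)$. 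The range $\epsilon^\star \geq \epsilon \geq \log(n)/n$ enters because we need $n/k = \Omega(\log n)$ blocks for the high-probability alignment and the failure-probability union bound over blocks to go through, and $\epsilon \le \epsilon^\star$ keeps the per-block separation constant meaningful.

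The main obstacle I expect is step three's alignment argument, specifically controlling how $\mathcal{A}$'s single edit-distance budget of $\epsilon n$ can be adversarially distributed and how a global optimal alignment of $\hat W$ to the template interacts with the block structure — edit distance is not additive over concatenations in general, so I need the separators to be ``edit-distance barriers'' (long, highly structured, and mutually far) so that any near-optimal alignment must respect block boundaries up to negligible slack, letting me charge errors to blocks essentially independently. A secondary subtlety is ensuring the $T$ traces of the padded string genuinely simulate independent traces of $Z$ per block: because the deletion channel acts independently per coordinate, the restriction of one long trace to one block \emph{is} an independent trace of that block's copy of $Z$ (after we correctly excise separators), so the simulation is clean once boundary-location succeeds — which again is why the separators must be long enough ($\Omega(\log n)$ ish, consistent with the $\epsilon \ge \log n / n$ regime) that their surviving length concentrates and they remain identifiable in every trace with probability $1 - o(1/n)$.
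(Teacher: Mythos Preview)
Your reduction has a genuine gap in the trace accounting. You want to build a distinguisher for the length-$k$ pair $X',Y'$ (with $k=C/\epsilon$) by running the approximate reconstructor $\mathcal{A}$ on $T$ traces of the length-$n$ string $Z^{\otimes m}$, where $m\approx n/k$. But the distinguisher is handed traces of the \emph{short} string $Z$, and simulating a single trace of $Z^{\otimes m}$ requires concatenating $m$ fresh independent traces of $Z$; producing $T$ long traces therefore costs $Tm$ short traces. Your item (i) goes the wrong direction --- it explains how one long trace yields $m$ short traces, which is not what the reduction needs. The conclusion of your argument is thus only $Tm\ge f(k)$, i.e., $T\ge f(C/\epsilon)\cdot C/(\epsilon n)$, weaker than the claimed bound by a factor of roughly $\epsilon n$. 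There is also a secondary issue: the known hard pair has $\de(X',Y')=2$, and appending separators or padding cannot increase this, so the ``large constant per-block separation'' you rely on is simply not available.

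The paper sidesteps the trace-multiplication loss by abandoning the reduction and instead exhibiting a hard \emph{distribution} on length-$n$ inputs: each of the $b\approx 128\epsilon n$ blocks is an \emph{independent} uniform choice between $X'$ and $Y'$ (not $m$ copies of the same $Z$). The core step is an information-theoretic decomposition of the per-block trace distribution as $(1-\gamma)\mathcal{F}+\gamma\,\mathcal{G}_{V_i}$ with $\gamma\le 2\alpha$, where $\mathcal{F}$ is independent of the block's identity; so for at least $(1-4\alpha)b$ blocks w.h.p.\ the traces carry zero information about that block, and the algorithm's output there agrees with the (still uniformly random) truth with probability exactly $1/2$. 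A combinatorial lemma (\lemref{block-ed-lb}: for any block partition of $V$ there exists a partition of $\widehat X$ with $\de(\widehat X,V)\ge\sum_i\mathds{1}_{\{\widehat X_i\neq V_i\}}$) then converts ``$\Omega(b)$ mismatched blocks'' into $\de(\widehat X,V)\ge b/8>\epsilon n$, and it charges only $1$ per wrong block, so the tiny edit distance between $X'$ and $Y'$ is irrelevant. The independent randomness per block is precisely what avoids your factor-$m$ loss: the algorithm sees only $T$ traces in total yet must determine $b$ independent coin flips simultaneously.
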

\noindent
Plugging in the bound on $f(1/\epsilon)$, our theorem shows that $(1/\epsilon)^{3/2}/\mathrm{polylog}(1/\epsilon)$ traces are required for $\epsilon n$-approximate reconstruction. For example, if $\epsilon = n^{-2/3 - \delta}$, then we obtain the following. 

\begin{cor}\corlab{LBCorollary} For any constant $\delta \in (0,1/3)$, we have that
	$n^{1 + 3\delta/2}/\mathrm{polylog}(n)$ 
	traces are necessary to $n^{1/3 - \delta}$-approximately reconstruct an arbitrary $n$-bit string with probability $1-1/n$.
\end{cor}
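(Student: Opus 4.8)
The plan is to obtain \corref{LBCorollary} as an immediate specialization of \thmref{blackboxlowerbound}, combined with Chase's exact‑reconstruction lower bound $f(n) = \widetilde{\Omega}(n^{3/2})$. Set $\epsilon := n^{-2/3 - \delta}$, so that $\epsilon n = n^{1/3 - \delta}$ is exactly the target accuracy. Since $\delta\in(0,1/3)$ is a fixed constant we have $1/3 - \delta > 0$, hence $\log n \le n^{1/3-\delta}$ for all large $n$, i.e.\ $\log(n)/n \le \epsilon$; and trivially $\epsilon = n^{-2/3-\delta} \le \epsilon^\star$ for large $n$. Thus this $\epsilon$ lies in the admissible window $[\log(n)/n,\ \epsilon^\star]$ of \thmref{blackboxlowerbound} for all sufficiently large $n$ (and the asymptotic lower bound being claimed is vacuous for the remaining small $n$).

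Next I would pin down the quantity $f$ appearing in \thmref{blackboxlowerbound}: $f(m)$ is the number of traces required to distinguish some worst‑case pair of $m$‑bit strings with advantage $\alpha = 1/8$. Chase's theorem exhibits a pair of $m$‑bit strings whose trace distributions have total variation distance bounded by a constant unless $\widetilde{\Omega}(m^{3/2})$ traces are drawn; in particular no algorithm with $o\bigl(m^{3/2}/\polylog(m)\bigr)$ traces distinguishes them with advantage $1/8$, so $f(m) = \widetilde{\Omega}(m^{3/2})$ with the constant $\alpha = 1/8$ used in the theorem.

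Applying \thmref{blackboxlowerbound} with $\epsilon = n^{-2/3-\delta}$ then yields that every algorithm that $\epsilon n$‑approximately reconstructs an arbitrary $n$‑bit string with probability $1 - 1/n$ uses at least $f(C/\epsilon) = f\bigl(C\, n^{2/3+\delta}\bigr)$ traces. Substituting Chase's bound and absorbing the constant $C^{3/2}$ and the polylog factor (legitimate since $2/3+\delta$ is a constant, so $\polylog(C\,n^{2/3+\delta}) = \polylog(n)$), we get
\[
f\left(C\, n^{2/3+\delta}\right) \;=\; \widetilde{\Omega}\left( \bigl(n^{2/3+\delta}\bigr)^{3/2} \right) \;=\; \widetilde{\Omega}\left( n^{\,1+3\delta/2} \right),
\]
using $(2/3+\delta)\cdot(3/2) = 1 + 3\delta/2$; this is the claimed bound $n^{1+3\delta/2}/\polylog(n)$. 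So the entire proof of the corollary is bookkeeping of this one exponent together with the check of the parameter window above.

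Consequently there is no substantive obstacle in the corollary itself — the content is entirely in \thmref{blackboxlowerbound}. Were I proving that theorem from scratch, the crux would be the black‑box reduction: from a hypothetical $\epsilon n$‑approximate reconstructor using $T$ traces on length‑$n$ inputs, manufacture a $T$‑trace distinguisher for a hard pair $(X',Y')$ of length $\Theta(1/\epsilon)$. This asks one to embed the unknown hard instance into a length‑$n$ template (padding, repetition, and marker regions) so that simultaneously (i) the two induced length‑$n$ strings are more than $2\epsilon n$ apart in edit distance — so an $\epsilon n$‑close reconstruction identifies which one generated the traces by comparing edit distances — and (ii) a trace of the long template can be simulated exactly in distribution from a single trace of the short hard instance together with fresh independent randomness, so that $T$ reconstruction‑traces become $T$ distinguishing‑traces. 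Reconciling (i) and (ii) — enlarging the edit distance without making the embedded instance any easier to distinguish, and treating the marker/boundary regions so the simulated trace law is exactly correct — is where the difficulty lies.
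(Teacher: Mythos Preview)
Your derivation of the corollary is correct and matches the paper's: it states outright that \corref{LBCorollary} ``follows immediately'' from \thmref{blackboxlowerbound} together with Chase's $\widetilde{\Omega}(n^{3/2})$ lower bound, exactly the substitution $\epsilon = n^{-2/3-\delta}$ and exponent arithmetic you carry out (your explicit check that $\epsilon$ lands in the admissible window is a nice addition the paper omits).

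Your closing speculation about how one would prove \thmref{blackboxlowerbound} itself, however, diverges from what the paper actually does. You sketch an embedding/simulation reduction: pad a single hard instance into a length-$n$ template with markers so that (i) the two resulting strings are $>2\epsilon n$ apart in edit distance and (ii) traces of the long string can be simulated from traces of the short one. The paper instead builds the hard input by concatenating $\Theta(\epsilon n)$ \emph{independent random} copies of $X'$ or $Y'$, argues via a coupling that a $(1-4\alpha)$ fraction of blocks are ``indistinguishable'' from the traces, and then uses a combinatorial lemma (\lemref{block-ed-lb}) lower-bounding edit distance by the number of mismatched blocks in any aligned partition. Your route would give a cleaner two-point reduction if the simulation step could be made exact, but enforcing (i) while keeping the pair hard is delicate (the canonical hard pairs are at edit distance $O(1)$); the paper's block-concatenation approach sidesteps this by aggregating many independent hard instances and paying for the edit-distance-to-block-disagreement translation via \lemref{block-ed-lb}.
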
 
\noindent
\thmref{blackboxlowerbound} also allows for $\epsilon$ to be as small as $\log(n)/n$, implying that a very close approximation is not possible with substantially fewer traces than exact reconstruction.  

Our lower bound for Hamming distance in \thmref{lowerboundHamming} is simpler. It shows that $\Omega(n)$ traces are necessary to achieve an approximation closer than $n/4$ in Hamming distance to the actual string. In particular, we get a linear lower bound for a linear Hamming distance approximation, which is much stronger than our result for edit distance.

\begin{restatable}{theorem}{lowerboundHamming}
	\thmlab{lowerboundHamming}
	Any algorithm that can output an approximation within Hamming distance $n/4 - 1$ of an arbitrary length $n$ string with probability at least $3/4$ must use $\Omega(n)$ traces.
\end{restatable}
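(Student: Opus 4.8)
The plan is to exhibit a family of strings that are information-theoretically hard to distinguish from a fixed "reference" string using few traces, yet are pairwise far enough in Hamming distance that no single output can be within $n/4-1$ of more than one of them. A natural candidate is a set of strings that differ from one another only by a small, hard-to-detect perturbation localized near the end of the string. The classic construction is to take strings of the form $1^a 0^b$ with $a+b = n$ and $a$ ranging over an interval of width $\Theta(n)$; these were already observed (via \cite{BatuKannan04-RandomCase,tr-revisited}) to require $\Omega(n)$ traces to tell apart, because the location of the $0/1$ boundary after passing through a constant-rate deletion channel concentrates only up to $\Theta(\sqrt n)$, so $\Theta(\sqrt n)$ consecutive values of $a$ produce statistically indistinguishable trace distributions (total variation $o(1)$ between any such pair). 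However, strings $1^a 0^{n-a}$ for nearby $a$ are \emph{close} in Hamming distance, so that family alone does not give a Hamming separation; I would instead use strings of the form $1^{a} 0^{a} 1^{a} 0^{a}\cdots$ or, more cleanly, strings whose suffix pattern depends on $a$ in a way that forces large pairwise Hamming distance while the \emph{prefix} that the channel actually "sees well" is identical. Concretely, fix a block structure $X_a = 1^{n/2}\, w_a$ where $w_a$ is one of a collection of $n/2$-bit strings that (i) are pairwise at Hamming distance $> n/2$, so any two of the $X_a$ are at distance $> n/2 > 2(n/4-1)$, and (ii) after the deletion channel, the portion of any trace coming from $w_a$ is short and uninformative unless $\Omega(n)$ traces are collected.

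The cleanest route avoids cleverness in $w_a$ and instead reuses the shift-in-a-run phenomenon directly. First I would set up the two-point / multi-point version of Le Cam's lemma (a Bayesian adversary argument): if $X$ is drawn uniformly from a set $\mathcal{S}$ of candidate strings and any algorithm using $T$ traces succeeds with probability $\geq 3/4$ at outputting a string within Hamming distance $n/4-1$, then for any two strings in $\mathcal{S}$ at Hamming distance $> n/2$ the algorithm must, with decent probability, "correctly identify" which one was chosen — hence it solves a distinguishing problem, and so the total variation distance between the $T$-trace distributions of those two strings must be $\Omega(1)$. Second, I would lower-bound the number of traces needed for that: by a hybrid/coupling argument, $d_{TV}\big(\text{Tr}^{\otimes T}(X), \text{Tr}^{\otimes T}(X')\big) \leq T \cdot d_{TV}\big(\text{Tr}(X), \text{Tr}(X')\big)$, and for the right pair $d_{TV}(\text{Tr}(X),\text{Tr}(X')) = O(1/\sqrt n)$ (the boundary-location concentration estimate, or equivalently a binomial-overlap computation), forcing $T = \Omega(\sqrt n)$ at best from a single pair. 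To push this to $\Omega(n)$ I would use the \emph{whole family} at once rather than one pair: with $\Theta(n)$ candidates that are "locally confusable in a chain" (consecutive ones at TV distance $O(1/\sqrt n)$) but globally Hamming-far, a packing/Fano-type argument over the family amplifies the $\Omega(\sqrt n)$ to $\Omega(n)$, because the algorithm must pin down the index to within an error window of only $O(1)$ candidates even though adjacent windows overlap on the scale of $\sqrt n$.

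The step I expect to be the main obstacle is arranging a single family $\mathcal{S}$ that simultaneously (a) has all pairwise Hamming distances $> n/2$, so the $n/4-1$ ball-covering obstruction bites, and (b) has a \emph{fine chain structure} — consecutive elements at trace-TV distance $O(1/\sqrt n)$ — so the $\Omega(n)$ bound (not just $\Omega(\sqrt n)$) comes out. These two requirements pull in opposite directions: Hamming-far usually means trace-distributions are far too. The resolution I would pursue is to make the index $a$ control a "slow" parameter (like the position of a single long-run boundary, whose trace signature changes by $O(1/\sqrt n)$ per unit) but have that same parameter \emph{also} deterministically toggle a large, downstream block of bits (e.g.\ flip the parity of a long suffix run pattern), so Hamming distance jumps by $\Theta(n)$ while the channel's view changes only slowly; then verify carefully that the downstream block genuinely contributes nothing extra to distinguishability (its contribution to $d_{TV}(\text{Tr}(X_a),\text{Tr}(X_{a'}))$ is dominated by the boundary term). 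Once that construction is in hand, the probabilistic argument is routine: Le Cam / Fano over $\mathcal{S}$, the subadditivity of TV over $T$ independent traces, and the $O(1/\sqrt n)$ single-trace estimate combine to give $T = \Omega(n)$, and the $3/4$ success probability and the $n/4-1 < n/4$ slack are exactly what make the covering argument go through.
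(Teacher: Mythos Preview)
Your plan contains a genuine misconception that sends you down a far more complicated path than needed. You assert that a single pair of strings can yield at best $\Omega(\sqrt{n})$ via the subadditive bound $d_{TV}(P^{\otimes T},Q^{\otimes T})\le T\cdot d_{TV}(P,Q)$, and that therefore a Fano/packing argument over a large family is required to reach $\Omega(n)$. This is wrong: the subadditive TV bound is simply the wrong tool. The known result you yourself allude to (from \cite{BatuKannan04-RandomCase}) already says that distinguishing a leading $0$-run of length $k$ from one of length $k+1$ requires $\Omega(k)$ traces --- equivalently, estimating a $\mathrm{Bin}(k,p)$ mean to additive accuracy $O(1)$ needs $\Omega(k)$ samples, or in divergence language, the squared Hellinger / KL between the two trace laws is $\Theta(1/k)$, which tensorizes correctly. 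So a \emph{two-point} argument already gives $\Omega(n)$; there is no amplification step to perform.

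The paper's proof is exactly this two-point argument, with the one idea you are missing: how to make two strings that are simultaneously trace-close and Hamming-far. Take $n=4k+1$, $X=0^{k}(01)^{k}0^{k+1}$ and $Y=0^{k+1}(01)^{k}0^{k}$. These differ only in whether the first $0$-run has length $k$ or $k+1$ (and symmetrically at the end), so distinguishing them is precisely the $k$-versus-$(k+1)$ problem and needs $\Omega(k)=\Omega(n)$ traces. But because the middle block $(01)^{k}$ is alternating, shifting it by one position flips every one of its $2k$ bits, so $d_{\mathrm H}(X,Y)=2k\approx n/2$. Hence any output within Hamming distance $n/4-1$ of the true string identifies which of $X,Y$ was input, and an algorithm succeeding with probability $3/4$ would contradict the $\Omega(n)$ distinguishing lower bound.

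Thus the obstacle you flagged --- building a $\Theta(n)$-size family that is both pairwise Hamming-far and chained at trace-TV $O(1/\sqrt{n})$ --- is a self-imposed difficulty; two carefully chosen strings suffice.
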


\subsection{Technical overview}

The high-level strategy for all of our algorithms is the following. First, we identify the remnants of structured substrings, that is, long runs and dense substrings, from the original string in the traces. Then, when given more than one trace, we can use these substrings to align traces. After aligning traces, we capitalize on the approximate nature of our objective by estimating lengths of runs which are close in edit distance to substrings of the original string.

The gap condition for $0$-runs in \thmref{gapone} states that the unknown string only contains $0$-runs with length either less than $a_1 := C' \log (n) $ or greater than $a_2 := 3C' \log (n) $,
for large enough $C'$ (and nothing in the middle). 
We show that there exist values $a_1', a_2'$, with $p a_1 < a_1' < a_2' < p  a_2$, such that with high probability 
there does not exist a $0$-run of length at least $a_2$ in the original string that has been reduced to a $0$-run of length less than $a_2'$ in a trace, nor a $0$-run of length less than $a_1$ reduced to a $0$-run of length more than $a_1'$.
This implies that we can always distinguish between short and long runs of $0$s in all of the traces (which would be challenging without the gap condition).
We can align long runs of $0$s from the traces and take a scaled average of the lengths of the $i$th long run of $0$s across all $T$ traces.
By using a scaled average across traces, we can estimate the number of bits between consecutive long runs of $0$s. Then, our algorithm outputs 
a run of $1$s here, which accounts for long runs of $1$s and short runs of $0$s. Note that this piece of the algorithm is inherently approximate since we replace  
short runs of $0$s with $1$s. This completes, what we call, our {\em algorithm for identifying long runs}.

The algorithm for \thmref{gaponerobust} is similar to \thmref{gapone}. We identify long $0$-runs from $Y$ in each of the traces and align by these $0$-runs, then approximate the rest using $1$s. However, the alignment step is more difficult since  
the long $0$-runs from $Y$ may not be $0$-runs in $X$ and not easily found in traces. Instead, we identify long 0-dense substrings in each trace that with high probability originate from long $0$-runs in $Y$. We refer to this as the {\em algorithm for identifying dense substrings}. Then we align and average as in \thmref{gapone} to approximate the unknown string. 

Our algorithm in \thmref{singletracetwo} takes a uniform partition of a single trace, where each part has length $C\log(n) / \varepsilon$, and it outputs a series of runs, where each run has length $C\log(n) / (\varepsilon p)$ and parity the majority bit of the interval.
Note the partitions have length at most an $O(\epsilon)$ fraction of the high density intervals.
Therefore in any high density interval of the original string,
most of the partitions of the trace originating from that interval will also have high density of the same bit. We refer to the method for this result as the {\em algorithm for majority voting in substrings}.

The algorithms and analyses for these three theorems are in \secref{upper-bounds}.

\paragraph{Lower Bounds.} For the edit distance approximation in \thmref{blackboxlowerbound}, we start with two strings of length $C/\epsilon$ that require $f(C/\epsilon)$ traces to distinguish for some constant $C \in (0,1)$ and $\epsilon < C$. We then construct a hard distribution over length $n$ strings by concatenating $\epsilon n / C$ substrings, where each substring is an independent random choice between the two strings. Our strategy is to show that if the algorithm outputs an approximation within $\epsilon n$ edit distance, then it must correctly determine a large number of the component strings. However, proving this requires some work because the guarantee of the reconstruction algorithm is in terms of an edit distance approximation. To handle this challenge, we provide a technical lemma that relates the edit distance of any pair of strings to a sum of binary indicator vectors for the equivalence of certain substrings (\lemref{block-ed-lb}). Then, we use this lemma to argue that the algorithm's output must be far from the true string if the number of traces is less than $f(C/\epsilon)$ because many substrings must disagree. 

For the Hamming distance lower bound in \thmref{lowerboundHamming}, we use a more straightforward argument. We start with a known lower bound from Batu, Kannan, Khanna, and McGregor~\cite{BatuKannan04-RandomCase}. They observe that $\Omega(k)$ traces are necessary to determine if a string starts with $k$ or $k+1$ zeros. We then construct a hard pair of strings of length roughly $4k$ such that if the algorithm misjudges the prefix length, then it must incur a cost of at least $2k$ in Hamming distance. Since $k = \Omega(n)$, we obtain the desired lower bound. 

The proofs for both lower bounds appear in \secref{lower-bounds}.

\subsection{Preliminaries}

Let $\de(X,X')$ denote the edit distance between $X$ and $X'$, defined as the minimum number of insertions, deletions, and substitutions that are required to transform $X$ into $X'$. Note that  edit distance is a metric. 
For each class of strings that we consider, we present an algorithm and argue that it can $\varepsilon n$-approximately reconstruct any string from the class.
Our algorithms output a string~$\widehat{X}$, an approximation of $X$, satisfying $\de(X,\widehat X) \leq \epsilon n$ with high probability. 

We denote a single run by $\mathbf{r}$ and a set of runs by ${\mathbf{r}}_1, \ldots, {\mathbf{r}}_k$. 
Our convention is to let $X$ denote the unknown string that we wish to reconstruct, and $Y$ will often be a modified version.
A single trace will be denoted by $\widetilde{X}$ and a set of traces by $\widetilde{X}_1,\ldots, \widetilde{X}_T$.
Tildes will also be used to mark runs and intervals of traces. 
Some strings $X$ we partition into $\ell$ substrings $X^1,\ldots, X^{\ell}$; their concatenation to form $X$ is denoted as $X = X^1 X^2 \cdots X^{\ell}$. 

Some of our algorithms reconstruct $X$ by partitioning it into substrings $X^1,\ldots,X^{\ell}$ and reconstructing these substrings approximately. Specifically, we will find strings $\widehat{X}^i$ such that the edit distance between $\widehat{X}^i$ and $X^i$ is at most $\varepsilon |X^i|$, and then we will invoke the following lemma to see that $X = X^1\cdots X^{\ell}$ and $\widehat{X} = \widehat{X}^1\cdots \widehat{X}^{\ell}$ have edit distance at most $\varepsilon n$. 
\renewcommand{\de}{d_\mathsf{E}}
\begin{lemma}\lemlab{substringrecon}
    Let $X = X^1 X^2 \cdots X^{\ell}$ and $\widehat{X} = \widehat{X}^1\cdots \widehat{X}^{\ell}$ be strings on $n$ bits. If the edit distance between  $X^i$  and $\widehat{X}^i$ is at most $\varepsilon |X^i|$ for all $i \in [\ell]$, then 
    $\de(X,\widehat{X}) \leq \varepsilon n$. 
\end{lemma}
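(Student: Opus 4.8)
The statement to prove is the sub-additivity of edit distance under concatenation: if $X = X^1 \cdots X^\ell$ and $\widehat{X} = \widehat{X}^1 \cdots \widehat{X}^\ell$ with $\de(X^i, \widehat{X}^i) \le \varepsilon |X^i|$ for each $i$, then $\de(X, \widehat{X}) \le \varepsilon n$ where $n = |X| = \sum_i |X^i|$.

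The plan is to prove the more basic fact that edit distance is subadditive under concatenation, namely $\de(AB, A'B') \le \de(A, A') + \de(B, B')$ for any strings $A, A', B, B'$, and then apply it inductively. The key observation is that edit distance equals the minimum cost of an alignment (a sequence of insertions, deletions, and substitutions transforming one string into the other). Given an optimal sequence of edit operations transforming $A$ into $A'$ and another transforming $B$ into $B'$, I can concatenate these two edit sequences: first perform all the operations that turn the prefix $A$ into $A'$ (these operations only touch the first block and leave the suffix untouched), then perform all the operations that turn the suffix $B$ into $B'$. The result transforms $AB$ into $A'B'$, and the total number of operations used is $\de(A,A') + \de(B,B')$. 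Hence $\de(AB, A'B') \le \de(A,A') + \de(B,B')$.

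From here the lemma follows by a straightforward induction on $\ell$. The base case $\ell = 1$ is immediate. For the inductive step, write $X = (X^1 \cdots X^{\ell-1}) X^\ell$ and $\widehat{X} = (\widehat{X}^1 \cdots \widehat{X}^{\ell-1}) \widehat{X}^\ell$; then by subadditivity and the inductive hypothesis,
\[
\de(X, \widehat{X}) \;\le\; \de\!\left(X^1\cdots X^{\ell-1},\, \widehat{X}^1\cdots \widehat{X}^{\ell-1}\right) + \de\!\left(X^\ell, \widehat{X}^\ell\right) \;\le\; \sum_{i=1}^{\ell-1} \varepsilon |X^i| + \varepsilon |X^\ell| \;=\; \varepsilon \sum_{i=1}^{\ell} |X^i| \;=\; \varepsilon n.
\]

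I do not anticipate a genuine obstacle here; the only point requiring a little care is making the "concatenate the two edit sequences" argument rigorous, i.e., verifying that an edit sequence acting on $A$ can be viewed as acting on $AB$ without disturbing the $B$ part, and similarly that the subsequent edits on the suffix do not interfere. This is cleanest if one works with the alignment/trace formulation of edit distance (a monotone matching between positions of the two strings, with unmatched positions counted as insertions or deletions and mismatched matched pairs as substitutions): the disjoint union of an optimal alignment between $A$ and $A'$ and an optimal alignment between $B$ and $B'$ is a valid alignment between $AB$ and $A'B'$ whose cost is the sum of the two costs. That gives the subadditivity inequality directly, and the induction completes the proof.
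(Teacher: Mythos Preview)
Your proof is correct and follows essentially the same structure as the paper's: establish the two-block subadditivity $\de(AB,A'B') \le \de(A,A') + \de(B,B')$ and then induct. The only minor difference is in how the two-block step is justified: the paper obtains it in one line from the triangle inequality via the intermediate string $\widehat{X}^1 X^2$, namely
\[
\de(X^1X^2,\widehat{X}^1\widehat{X}^2) \le \de(X^1X^2,\widehat{X}^1X^2) + \de(\widehat{X}^1X^2,\widehat{X}^1\widehat{X}^2) = \de(X^1,\widehat{X}^1) + \de(X^2,\widehat{X}^2),
\]
whereas you argue constructively by concatenating optimal alignments. The triangle-inequality route is slightly slicker since it sidesteps any discussion of how edit sequences on disjoint blocks interact, but your alignment argument is equally valid and arguably more self-contained.
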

 \begin{proof}
 We will use the fact that edit distance satisfies the triangle inequality. Consider bit strings $ X=X^1X^2$ and $\widehat{X} = \widehat{X}^1\widehat{X}^2$. Then,
 $$
 \de(X^1X^2, \widehat{X}^1\widehat{X}^2) \leq  
 \de(X^1 X^2, \widehat{X}^1 X^2) + \de(\widehat{X}^1 X^2, \widehat{X}^1\widehat{X}^2)   
=   \de(X^1, \widehat{X}^1) + \de(X^2,\widehat{X}^2).
 $$
This extends to $ X=X^1\cdots X^{\ell}$ and $\widehat{X} = \widehat{X}^1\cdots \widehat{X}^{\ell}$ by recursively applying the above inequality. 
\end{proof}

\section{Warm-up: Approximating strings that only have long runs}
\seclab{warm-up}

We begin with two simple cases that demonstrate some of our algorithmic techniques.
For this section, we defer proofs to~\apndref{A}. We note that other methods may lead to similar or slightly better results in some regimes, but we follow this presentation as a prelude to~\secref{general-ub}.

The first algorithm $\epsilon n$-approximately reconstructs a string with long runs using $\Omega(\log(n) / \varepsilon^2)$ traces by scaling an average of the run length across all traces.

\begin{prop}\proplab{longruns}
	Let $X$ be a string on $n$ bits such that all of its runs have length at least $\log(n^5)$.
	Then $X$ can be $\varepsilon n$-approximately reconstructed with 
	$O(\log(n) / \varepsilon^2)$ traces.
\end{prop}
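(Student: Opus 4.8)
The plan is to recover each run of $X$ separately, approximate it by a run of the same parity whose length is a scaled empirical average of the corresponding run lengths across the traces, and then invoke \lemref{substringrecon} to combine the approximations. First I would set up notation: write $X = \mathbf{r}_1 \mathbf{r}_2 \cdots \mathbf{r}_k$ as its decomposition into maximal runs, each of length $|\mathbf{r}_j| \geq \log(n^5) = 5\log(n)$. The key structural claim is that with high probability, \emph{every} run of $X$ survives in \emph{every} trace as a clearly identifiable run of the same parity, and moreover no two distinct runs of $X$ ever ``merge'' into one run in a trace. Since consecutive runs of $X$ have opposite parity, a merge in some trace $\widetilde X_t$ would require an entire run $\mathbf{r}_j$ between them to be deleted; but $\Pr[\mathbf{r}_j \text{ fully deleted}] = q^{|\mathbf{r}_j|} \leq q^{5\log n} = n^{-5}$ (recall $\log$ is base $1/q$). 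A union bound over the at most $n$ runs and $T = O(\log(n)/\eps^2) = \mathrm{poly}(n)$ traces (and over the two endpoints of each run, to rule out degenerate short survivals via a Chernoff bound) keeps the total failure probability $O(1/n)$. Condition on this good event for the rest of the argument.

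Next, on the good event, the $t$-th trace has exactly the form $\widetilde{\mathbf{r}}_1^{(t)} \widetilde{\mathbf{r}}_2^{(t)} \cdots \widetilde{\mathbf{r}}_k^{(t)}$, where $\widetilde{\mathbf{r}}_j^{(t)}$ is a nonempty run of the same parity as $\mathbf{r}_j$ and $|\widetilde{\mathbf{r}}_j^{(t)}| \sim \mathrm{Binomial}(|\mathbf{r}_j|, p)$ independently across $j$ and $t$. So the algorithm parses each trace into its maximal runs, checks that all traces have the same number $k$ of runs with matching parities (this holds on the good event), and for each $j$ computes the empirical average $\widehat{L}_j := \frac{1}{T}\sum_{t=1}^{T} |\widetilde{\mathbf{r}}_j^{(t)}|$, then outputs $\widehat{X} = \widehat{\mathbf{r}}_1 \cdots \widehat{\mathbf{r}}_k$ where $\widehat{\mathbf{r}}_j$ is the run of length $\mathrm{round}(\widehat{L}_j / p)$ with the parity of $\mathbf{r}_j$. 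By a Chernoff/Hoeffding bound, $\E[\widehat L_j] = p|\mathbf{r}_j|$ and with $T = C\log(n)/\eps^2$ traces we get, for each $j$ and hence (by a union bound over $j \le n$) simultaneously for all $j$ with high probability, $|\widehat L_j/p - |\mathbf{r}_j|| \le \eps |\mathbf{r}_j|$ — here it is important that we can afford a multiplicative error because $|\mathbf{r}_j| \ge 5\log n$, so an additive $O(\sqrt{|\mathbf{r}_j|\log n}/T)\cdot(1/p)$ deviation is at most $\eps|\mathbf{r}_j|$ once $T \gtrsim \log(n)/\eps^2$.

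Finally, $\widehat{\mathbf{r}}_j$ and $\mathbf{r}_j$ are runs of the same parity, so $\de(\mathbf{r}_j, \widehat{\mathbf{r}}_j) = \big| |\mathbf{r}_j| - |\widehat{\mathbf{r}}_j| \big| \le \eps |\mathbf{r}_j| + 1 \le 2\eps|\mathbf{r}_j|$ (absorbing the rounding into the constant, using $|\mathbf{r}_j|\ge 5\log n \ge 1/\eps$ in the relevant range, or just rescaling $\eps$). Applying \lemref{substringrecon} to the partitions $X = \mathbf{r}_1\cdots\mathbf{r}_k$ and $\widehat X = \widehat{\mathbf{r}}_1\cdots\widehat{\mathbf{r}}_k$ yields $\de(X,\widehat X) \le 2\eps \sum_j |\mathbf{r}_j| = 2\eps n$, and rescaling $\eps$ by a constant gives the claimed $\eps n$ bound with $O(\log(n)/\eps^2)$ traces; the algorithm clearly runs in polynomial time.

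I expect the main obstacle to be the bookkeeping around the ``good event'': one must be careful that not only does no run vanish entirely, but also that the parsing of traces into runs is unambiguous and consistent across all traces (same run count, same parity pattern), so that the empirical average $\widehat L_j$ is averaging the right quantities. A cleaner way to package this is to observe that, conditioned on no run of $X$ being fully deleted in trace $t$, the run-decomposition of $\widetilde X_t$ is automatically in bijection with that of $X$ with matching parities — so the only bad event to control is full deletion of some run, which is the $n^{-5}$ estimate above. Everything else is a routine Chernoff bound plus \lemref{substringrecon}.
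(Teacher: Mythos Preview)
Your proposal is correct and follows essentially the same approach as the paper: argue that no run is fully deleted in any trace (so all traces have the same run decomposition as $X$), estimate each run length by the scaled empirical average across traces, bound the per-run error multiplicatively via Chernoff, and conclude with \lemref{substringrecon}. The paper's proof is terser and omits the rounding discussion, but the algorithm and analysis are the same.
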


\subsubsection*{Algorithm}

\begin{enumerate}
    \item[{\bf Set-up:}] String $X$ on $n$ bits such that all of its runs have length at least $\log(n^5)$.
    \item Sample $T = \frac{2}{p \varepsilon^2} \log(n)$ traces, $\widetilde{X}_1,\ldots, \widetilde{X}_T$, from the deletion channel with deletion probability~$q$. Fail if all traces do not have the same number of runs. Otherwise let $k$ denote the number runs in every trace.
    \item Compute $\widetilde{\mu}_i = \frac{1}{T}\sum_{j=1}^T |\widetilde{\mathbf{r}}^j_i|$ for all $i \in [k]$, where $\widetilde{\mathbf{r}}^j_1,\widetilde{\mathbf{r}}^j_2,\ldots,\widetilde{\mathbf{r}}^j_k$ are the $k$ runs of $\widetilde{X}_j$.
    \item Output $\widehat{X} = \widehat{X}_1 \cdots \widehat{X}_k$, where $\widehat{X}_i$ has length $ \widetilde{\mu}_i/p$ and bit value matching run $i$ of the traces.
\end{enumerate}

\noindent
The analysis is a basic use of Chernoff bounds; see \apndref{A} for details.

Ideally we would only require that $1$-runs have length $\Omega(\log(n))$, without restricting the length of $0$-runs. The following result shows that if we require the $1$-runs to be $\Omega(\frac{1}{\epsilon^2}\log(n))$, 
which is an order of $1 / \varepsilon$ larger than in \thmref{gapone}, 
then approximate reconstruction is possible using one trace. 

\begin{prop}
    \proplab{oneruns}
    Let $X$ be a string on $n$ bits such that all of its $1$-runs have length at least $C' \log(n) / \varepsilon^2$.
    Then there exists a constant $C$ such that for $C' \geq C$, $X$ can be $\epsilon n$-approximately reconstructed with a single trace.
\end{prop}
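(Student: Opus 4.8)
The plan is to use a single trace $\widetilde X$ together with the crudest possible estimator: read off the maximal runs of $\widetilde X$, and output $\widehat X$ obtained by replacing each run of $\widetilde X$ with a run of the \emph{same bit value} whose length is that run's length divided by $p$ (rounded to an integer). No attempt is made to classify runs of the trace; the claim is that because every $1$-run of $X$ has length at least $C'\log(n)/\varepsilon^2$, this rescaling is already accurate enough in edit distance, and the mishandling of $0$-runs can be charged off against the long $1$-runs.

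First I would record the structural facts that hold with high probability over the deletion channel, via Chernoff bounds and a union bound over the at most $n$ runs of $X$. Since every $1$-run of $X$ has length $L\ge C'\log(n)/\varepsilon^2\ge C'\log n$, no $1$-run is entirely deleted (a run of length $L$ vanishes with probability $q^{L}=n^{-L/\log n}$); likewise no $0$-run of length at least $c\log n$ is deleted, for a suitable constant $c$. Moreover, every run of $X$ of length $L$, and every concatenation of surviving bits of such runs, contributes $pL\pm O(\sqrt{L\log n})$ bits to the trace. Two consequences: (i) the $0$-runs of $\widetilde X$ are in bijection with the not-fully-deleted $0$-runs of $X$ (each of which either has length below $c\log n$ or is estimated by the trace to within $O(\sqrt{L\log n})$); and (ii) each $1$-run of $\widetilde X$ is the concatenation of the surviving bits of a maximal block $\mathbf 1^{L_1}\mathbf 0^{a_1}\cdots\mathbf 0^{a_{t-1}}\mathbf 1^{L_t}$ of $X$ whose intervening $0$-runs all vanished, so each $a_j<c\log n$ while each $L_j\ge C'\log(n)/\varepsilon^2$.

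Next I would bound $\de(X,\widehat X)$ using \lemref{substringrecon}. Partition $X$ into consecutive chunks, one per $1$-run of $\widetilde X$: the chunk producing the $i$-th $1$-run of $\widetilde X$ is the block $\mathbf 1^{L_1}\mathbf 0^{a_1}\cdots\mathbf 1^{L_t}$ described above, followed by the original $0$-run $\mathbf 0^{a}$ (if any) producing the next $0$-run of $\widetilde X$; the matching piece of $\widehat X$ is $\mathbf 1^{B/p}\mathbf 0^{B'/p}$, where $B,B'$ are the trace lengths of the $1$-block and of $\mathbf 0^{a}$. Each chunk has length at least $C'\log(n)/\varepsilon^2$, since it contains a full $1$-run. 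For a fixed chunk, transforming its $\widehat X$-piece into its $X$-piece costs at most (a) $\sum_{j<t}a_j$ to re-insert the vanished $0$-runs; plus (b) $|B/p-\sum_j L_j|$ to correct the merged $1$-run's length; plus (c) $|B'/p-a|$ to correct the trailing $0$-run's length. Since each $a_j<c\log n$ and $t$ is at most the chunk length times $\varepsilon^2/(C'\log n)$, term (a) is at most $(c\varepsilon^2/C')$ times the chunk length; by the concentration bound above, terms (b) and (c) are each $O(\sqrt{(\text{chunk length})\cdot\log n}/p)$, which is $O(\varepsilon/\sqrt{C'})$ times the chunk length because the chunk length is at least $C'\log(n)/\varepsilon^2$; and rounding contributes at most $1$ per run, easily absorbed. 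Hence for a sufficiently large absolute constant $C'$ the total per-chunk cost is at most $\varepsilon$ times the chunk length, and \lemref{substringrecon} yields $\de(X,\widehat X)\le\varepsilon n$ with high probability. (I use here implicitly that $C'\log(n)/\varepsilon^2\le n$, i.e.\ $\varepsilon\ge\sqrt{C'\log(n)/n}$, as otherwise the hypothesis is vacuous; this also guarantees $|\widehat X|=|\widetilde X|/p=n\pm o(\varepsilon n)$.)

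The main obstacle is precisely term (a) and, more broadly, the $0$-runs: one cannot afford to simply drop the ``short'' $0$-runs, because the total length of $0$-runs shorter than $C'\log(n)/\varepsilon^2$ can be as large as $\Theta(n)$, which would blow the error budget. The resolution is that every chunk is \emph{anchored} by a genuine $1$-run of length $\ge C'\log(n)/\varepsilon^2$, which forces the vanished-$0$-run mass inside the chunk, the Chernoff fluctuations of the chunk, and the number of $1$-runs merged into the chunk to each be only an $O(1/\sqrt{C'})$ or $O(\varepsilon^2/C')$ fraction of the chunk's length; taking $C'$ large then makes each chunk individually well approximated, and \lemref{substringrecon} assembles the global guarantee. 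A secondary point needing care is that a single $1$-run of the trace may be a merge of many $1$-runs of $X$; this is controlled because every vanished $0$-run has length $O(\log n)$ with high probability, so the merges are never too numerous relative to the total $1$-mass they carry.
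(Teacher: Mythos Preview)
Your proof is correct, but it takes a genuinely different route from the paper's. The paper's algorithm \emph{thresholds} the $0$-runs of the trace at length $L=\log(n)/(10\varepsilon)$: long $0$-runs are kept and rescaled, while everything between consecutive long $0$-runs is replaced by a single $1$-run of rescaled length. The key step in the paper's analysis is a density argument: each ``between'' region $\mathbf s_i$ alternates $1$-runs of length $\ge C'\log(n)/\varepsilon^2$ with $0$-runs of length $\le O(\log(n)/\varepsilon)$, so it has $1$-density at least $1-\varepsilon$, and replacing it by an all-$1$ run costs only $O(\varepsilon|\mathbf s_i|)$. Your algorithm skips the thresholding entirely---you just rescale \emph{every} run of the trace---and your analysis instead groups the original string into chunks anchored by a full $1$-run of length $\ge C'\log(n)/\varepsilon^2$, then charges the vanished-$0$-run mass and the Chernoff fluctuations of both the merged $1$-block and the trailing $0$-run against the anchor.

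Both work. The paper's thresholding idea is what generalizes to \thmref{gapone}, where multiple traces are aligned by their long $0$-runs, so it is the natural warm-up for that section. Your version has the virtue of a simpler algorithm (no classification of runs at all) and shows that the naive rescaling is already enough under the hypothesis. One small imprecision in your write-up: the bound ``$|B/p-\sum_j L_j|=O(\sqrt{(\text{chunk length})\log n}/p)$'' is not quite justified as stated, since the chunk boundaries are random and a direct Chernoff--union bound over all possible merges is exponential. The clean fix---which you essentially use when you reach the conclusion ``$O(\varepsilon/\sqrt{C'})$ times the chunk length''---is to apply Chernoff once per (deterministic) $1$-run of $X$, obtaining $|B_j-pL_j|=O(\sqrt{L_j\log n})\le(\varepsilon/\sqrt{C'})\,L_j$ for each, and then sum. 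This gives the same final bound without the intermediate overclaim.
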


\subsubsection*{Algorithm}

\begin{enumerate}
    \item[{\bf Set-up:}] String $X$ on $n$ bits such that all its $1$-runs have length at least $\frac{6}{p} \log(n) / \varepsilon^2$.
    \item Sample 1 trace $\widetilde{X}$ from the deletion channel with deletion probability $q$.
    \item Let $L := \frac{\log(n)}{10 \varepsilon}$; $\widetilde{\mathbf{r}}_1$,\ldots,$\widetilde{\mathbf{r}}_k$ be $0$-runs in $\widetilde{X}$ with length at least $L$; 
    and $\widetilde{\mathbf{s}}_i$, for $i \in \{0,1,\ldots,k+1\}$, be the bits in $\widehat{X}$ before $\widetilde{\mathbf{r}}_1$, between $\widetilde{\mathbf{r}}_{i}$ and $\widetilde{\mathbf{r}}_{i+1}$, and after $\widetilde{\mathbf{r}}_k$, respectively.
    \item Output $\widehat{X} = \widehat{1}_0\widehat{0}_1 \widehat{1}_1\cdots \widehat{1}_k\widehat{0}_k \widehat{1}_{k+1}$, where $\widehat{1}_i$ is a $1$-run, length $\frac{|\widetilde{\mathbf{s}}_i|}{p}$, and $\widehat{0}_i$ is a $0$-run, length~$\frac{|\widetilde{\mathbf{r}}_i|}{p}$.
\end{enumerate}

The algorithm for \propref{oneruns} no longer attempts to align multiple traces. 
Step three is approximate by design because we use $1$-runs to fill in the gaps between the long $0$-runs. The error is from the variance of how many bits of each run are deleted by the deletion channel. See \apndref{A} for the proof.

\section{Approximating more general classes of strings}
\seclab{general-ub}\seclab{upper-bounds}

Moving on from our warm-ups, we reconstruct larger classes of strings. Our first two algorithms in this section reconstruct strings that still contain some long runs, where these help us align traces. Our third algorithm reconstructs from a single trace by approximately preserving local density.

\subsection{Identifying long runs}

To weaken the assumptions of \propref{longruns}, we want to consider strings where $0$-runs can be any length but $1$-runs must still be long and have length $\Omega(\log n)$. 
When relaxing the length restriction on the $0$-runs, the alignment step, step 1, of the algorithm for \propref{longruns}  begins to fail---entire runs of $0$s may be deleted, combining consecutive $1$-runs and making it difficult to identify which runs align together between traces. 
To still use an alignment algorithm that averages run lengths, we impose the weaker condition on the $0$-runs that they must be divided into short $0$-runs and long $0$-runs. 
As long as there is a gap of sufficiently large size such that there are no $0$-runs with length in the gap,
then in the traces we can identify which $0$-runs are long and which are short.

\gapone*

\subsubsection*{Algorithm for identifying long runs}

\begin{enumerate}
    \item[{\bf Set-up:}] String $X$ on $n$ bits such that all of its $1$-runs have length at least $C' \log(n)/ \varepsilon$, where $C' \geq 100 / p$, and all of its $0$-runs have length either greater than $3C' \log n$ or less than $C' \log n $.
    \item Sample $T = \frac{2}{p^2 \epsilon^2}\log(n)$ traces, $\widetilde{X}_1,\ldots, \widetilde{X}_T$, from the deletion channel with probability $q$. 
    \item Define $L:= 2C' p\log n$, and for all $j \in [T]$, index the $0$-runs in $\widetilde{X}_j$ with length at least $L$  as $\widetilde{\mathbf{r}}^j_1,\ldots, \widetilde{\mathbf{r}}^{j}_{k_j}$. 
    For $i \in [k_{j}-1]$, let $\widetilde{\mathbf{s}}^j_i$ be the bits between $\widetilde{\mathbf{r}}^j_{i}$ and  $\widetilde{\mathbf{r}}^j_{i+1}$ in $\widetilde{X}_j$ and let $\widetilde{\mathbf{s}}^j_0$ be the bits before $\widetilde{\mathbf{r}}^j_{1}$ and
    $\widetilde{\mathbf{s}}^j_{k_{j}+1}$ the bits after $\widetilde{\mathbf{r}}^j_{k_{j}}$ for all $j \in [T]$.
    \item If there exist $j \neq j' \in [T]$ such that $k_{j} \neq k_{j'}$, then fail without output. Otherwise, let $k:= k_{1} = k_{2} = \cdots = k_{T}$. 
    \item Compute $\widetilde{\mu}^{\mathbf{r}}_i = \frac{1}{T}\sum_{j=1}^T |\widetilde{\mathbf{r}}^j_i|$ for all $i \in [k]$ and  $\widetilde{\mu}^{\mathbf{s}}_i = \frac{1}{T}\sum_{j=1}^T |\widetilde{\mathbf{s}}^j_i|$  for all $i \in \{0\} \cup[k+1]$.
    \item Output $\widehat{X} = \widehat{1}_0\widehat{0}_1 \widehat{1}_1\cdots \widehat{1}_k\widehat{0}_k \widehat{1}_{k+1}$, where $\widehat{1}_i$ is a $1$-run, length $\frac{\widetilde{\mu}^{\mathbf{s}}_i}{p}$, and $\widehat{0}_i$ is a $0$-run, length $\frac{\widetilde{\mu}^{\mathbf{r}}_i}{p}$.
\end{enumerate}

Observe that the algorithm is inherently approximate, as we fill in the gaps between the long 0-runs with $1$-runs, omitting any short $0$-runs.

\subsubsection*{Analysis}

\begin{proof}[Proof of {\thmref{gapone}}]
 Let $X$ be a string on $n$ bits such that all of its $1$-runs have length at least $C' \log(n)/ \varepsilon$, where $C' \geq 100/ p$, and all of its $0$-runs have length either greater than $3C' \log n$ or less than $C' \log n $.
Take $T = \frac{2}{p^2 \epsilon^2}\log(n)$ traces of $X$.  By a Chernoff bound, with probability at least $1-\frac{1}{n^2}$, no $1$-run is fully deleted in any trace; in the following we assume that we are on this event.

    We will justify that in the traces we can identify all $0$-runs that had length at least $3C'\log(n)$ in~$X$.
    Let $\mathbf{r}$ be a $0$-run from $X$ with length $|\mathbf{r}| \ge 3C' \log(n)$. 
    Using a Chernoff bound, the probability that in a single trace 
    $\mathbf{r}$ is transformed into a run $\widetilde{\mathbf{r}}$ with $|\widetilde{\mathbf{r}}| \leq 2C' p \log(n)$ is bounded by %
    \begin{align*}
        \P\big(|\widetilde{\mathbf{r}}| \le 2 C' p \log(n) \big ) &  \ \le\  \P\big( ||\widetilde{\mathbf{r}}| - p |\mathbf{r}||\ge   C' p\log(n)\big)
        \le 2n^{-3}
    \end{align*}
    Similarly, for any $0$-run $\mathbf{r}$ in $X$ such that $|\mathbf{r}| \le C'  \log(n)$, 
    the probability that $\mathbf{r}$ is reduced to a run $\widetilde{\mathbf{r}}$ with $ |\widetilde{\mathbf{r}} | \geq 2C' p \log(n)$ 
    is bounded by
    $$
        \P\big(|\widetilde{\mathbf{r}}| \ge 2C' p \log(n) \big) 
         \ \le\  \P\big( ||\widetilde{\mathbf{r}}| - p |\mathbf{r}| | \ \geq\  C' p \log n \big)
        \ \le\ 2n^{-3}
  $$
    
    It follows that, with probability at least
     $1-\frac{4T}{n^2}$, there does not exist any $0$-run and any trace such that either of the ``unlikely'' inequalities above holds. 
On this event, we have that for any $0$-run $\mathbf{r}$ of length at least $3C' \log n$, and any trace $\widetilde{X}_{j}$, we can identify the image $\widetilde{\mathbf{r}}^j$ of $\mathbf{r}$ in trace $\widetilde{X}_j$. 
In particular, on this event, 
the number of $0$-runs in each trace that has length at least $2C' p \log(n)$ is equal to the number of $0$-runs in $X$ of length at least $3C'\log(n)$; 
thus $k_{1} = k_{2} = \cdots k_{T} =: k$. 
The algorithm and proof now proceed very similarly to those of \propref{oneruns}, except since we have more than a single trace, we estimate lengths of subsequences by scaling an average of the corresponding subsequences from the traces.
    
    Let $L:= 2C' p\log n  $ 
    and find every $0$-run in $\widetilde{X}_j$ with length at least $L$, indexing them as $\widetilde{\mathbf{r}}^j_1,\ldots, \widetilde{\mathbf{r}}^j_k$. 
    For $i \in [k-1]$, let $\widetilde{\mathbf{s}}^j_i$ be the bits between the last bit of $\widetilde{\mathbf{r}}^j_{i}$ and the first bit of $\widetilde{\mathbf{r}}^j_{i+1}$ in $\widetilde{X}_j$ and let $\widetilde{\mathbf{s}}^j_0$ be the bits before $\widetilde{\mathbf{r}}^j_{1}$ and
    $\widetilde{\mathbf{s}}^j_{k+1}$ the bits after $\widetilde{\mathbf{r}}^j_{k}$.
    Let ${\mathbf{s}}_i$ be the contiguous substring of $X$ from which $\widetilde{\mathbf{s}}^1_i,\ldots,\widetilde{\mathbf{s}}^T_i$ came and ${\mathbf{r}}_i$ the contiguous substring of $X$ from which $\widetilde{\mathbf{r}}^1_i,\ldots,\widetilde{\mathbf{r}}^T_i$ came. 
    
    For all $i$, we will approximate ${\mathbf{r}}_i$ with $\widehat{0}_i$  a $0$-run of length $\widetilde{\mu}^{\mathbf{r}}_i /p$ , for $\widetilde{\mu}^{\mathbf{r}}_i = \frac{1}{T} \sum_{j=1}^T |\widetilde{\mathbf{r}}^j_i|$, and we will approximate ${\mathbf{s}}_i$ with $\widehat{1}_i$, a $1$-run of length $\widetilde{\mu}^{\mathbf{s}}_i /p$, for  $\widetilde{\mu}^{\mathbf{s}}_i = \frac{1}{T} \sum_{j=1}^T |\widetilde{\mathbf{s}}^j_i|$.
    Applying a Chernoff bound and then a union bound, 
    $\P(\exists i : |\widetilde{\mu}^{\mathbf{r}}_i/p-|{\mathbf{r}}_i|| \ge \epsilon |{\mathbf{r}}_i|)  \leq 2n^{-3}$ and $\P(\exists i : |\widetilde{\mu}^{\mathbf{s}}_i/p-|{\mathbf{s}}_i|| \ge \epsilon |{\mathbf{s}}_i|)  \leq 2n^{-3}$.
    
    Since ${\mathbf{s}}_i$ contains alternating $1$-runs with length at least $C'\log(n) / \varepsilon$ and $0$-runs with length at most $C'\log(n)$,
    ${\mathbf{s}}_i$ has at least a $1-\epsilon$ density of $1$s. Therefore $\de({\mathbf{s}}_i, \widehat{1}_i) \leq 2 \varepsilon |{\mathbf{s}}_i|$ and $\de({\mathbf{r}}_i, \widehat{0}_i) \leq  \varepsilon |{\mathbf{r}}_i|$.
    Let $\widehat{X} = \widehat{1}_0\widehat{0}_1\widehat{1}_1 \cdots \widehat{1}_k\widehat{0}_k\widehat{1}_{k+1}$ and we see that from \lemref{substringrecon}
    \begin{align*}
    \de(X,\widehat{X}) &= \sum_{i=1}^k (\de(\widehat{0}_i, {\mathbf{r}}_i) + \de(\widehat{1}_i, {\mathbf{s}}_i) ) + \de(\widehat{0}_0, {\mathbf{s}}_0) +\de(\widehat{0}_{k+1}, {\mathbf{s}}_{k+1}) \\
    &\leq \sum_{i=1}^k ( \varepsilon |{\mathbf{r}}_i| + 2\varepsilon |{\mathbf{s}}_i|) + 2\varepsilon |{\mathbf{s}}_0|+2\varepsilon |{\mathbf{s}}_{k+1}| \leq 2 \varepsilon n.
    \end{align*}    
    If we apply this algorithm and analysis with $\epsilon/2$ instead of $\epsilon$, the result follows. Constants were taken large enough to account for this factor of 2. 
    \end{proof}

Note that the above theorem holds when the constant $C'$ is unknown. Given $T = O(\log n / \varepsilon^2)$ traces of $X$, we can determine whether or not $X$ had such a gap, and the corresponding $C'$ value, with high probability. We can then execute the algorithm as stated.

\subsection{Identifying dense substrings}

Here we extend the class of strings we can approximately reconstruct, proving a robust version of \thmref{gapone}. 
Specifically, we consider strings with similar properties to those in \thmref{gapone}, allowing for additional bit flips. 

\gaponerobust*

The general goal of the algorithm is similar to that of \thmref{gapone}, 
which is to identify long $0$-runs from $Y$ in each trace of $X$ and to align by these $0$-runs; then, we approximate the rest of $X$ with $1$-runs. 
Because $X$ and $Y$ have small edit distance, a good approximation for $Y$ is also good for $X$. 
Unfortunately the long $0$-runs from $Y$ are no longer necessarily $0$-runs in $X$, and therefore they are more difficult to find in the traces. 
Instead we find long $0$-dense substrings in $X$.

Let $X$ and $Y$ be as in the theorem statement. 
We also fix $m := C' \epsilon \log(n)$ throughout this subsection. 
Fix a trace $\widetilde{X}$ of $X$, as well as an index $\ell$.
Let $\widetilde{n}$ denote the length of the trace.
Define the indices $i_\ell$ and $j_\ell$ to be those that are $(m+1)$ $1$s to the left and right of $\ell$ in~$\widetilde{X}$, respectively, if such indices exist. 
We count the $0$s in $\widetilde{X}$ between indices $i_\ell$ and $j_\ell$ with the quantity 
$$ S_{\text{int}}(\widetilde{X},\ell) := \sum_{k = i_\ell}^{j_\ell} \mathds{1}_{\widetilde{X}[k] = 0}.$$ 

Note that $S_{\text{int}}(\widetilde{X},\ell) $ is not defined if $i_{\ell}$ or $j_{\ell}$ are not defined.
We use a slightly different quantity on the boundary of the trace to handle this.
Letting the definition of $i_{\ell}$ and $j_\ell$ remain the same, if $i_{\ell}$ or $j_{\ell}$ is not defined, then we consider
$ S_{\text{L-bound}}(\widetilde{X},\ell) := \sum_{k =0}^{j_\ell} \mathds{1}_{\widetilde{X}[k] = 0}$ or 
$ S_{\text{R-bound}}(\widetilde{X},\ell) := \sum_{k=i_{\ell} }^{\widetilde{n}} \mathds{1}_{\widetilde{X}[k] = 0}$, respectively. 
Combining the interior and boundary quantities, let
$S(\widetilde{X}_j,\ell) = S_{\text{int}}(\widetilde{X}_j,\ell)$ if there are $(m+1)$ 1s to the left and right of $\ell$, let $S(\widetilde{X}_j,\ell) = S_{L-\text{bound}}(\widetilde{X}_j,\ell)$ if there are $(m+1)$ 1s to the right of $\ell$ but not the left, and 
 let $S(\widetilde{X}_j,\ell) = S_{R-\text{bound}}(\widetilde{X}_j,\ell)$ if there are $(m+1)$ 1s to the left of $\ell$ but not the right.


In each trace we identify a set of substrings of $X$ that are $0$-dense, and then decide whether each such substring is long or short using $S(\widetilde{X}_j,\ell)$; 
that is, whether the corresponding unknown $0$-runs in $Y$ are long (length at least the upper bound of the gap) or short (length at most the lower bound of the gap).
If the traces all agree on the number of long $0$-dense substrings, we align the traces by these substrings and reconstruct in a manner similar to that of \thmref{gapone}. 

\subsubsection*{Algorithm for identifying dense substrings}

\begin{enumerate}
    \item[{\bf Set-up:}] String $X$ on $n$ bits formed by  flipping  at  most $ \varepsilon C' \log(n)$ bits in each run of $Y$, where $Y$ is  a string on $n$ bits such that all of its $1$-runs have length at least $C' \log(n)/\varepsilon$, for $C' \geq 1000/ p$, and all of its $0$-runs have length either greater than $3C' \log n$ or less than $C' \log n $. 
    \item Sample $T =\frac{2}{p^2 \varepsilon^2 } \log n$ traces, $\widetilde{X}_1,\ldots, \widetilde{X}_T$, from the deletion channel with deletion probability~$q$.
    \item  Set $m := \varepsilon C' \log n$ and $a := p C' \log n $. 
 For each trace $\widetilde{X}_j$, let $i$ be the smallest index of $\widetilde{X}_j$ such that $\widetilde{X}_j[i] = 0$ and
$|\{k : \widetilde{X}_j[k] = 0, |i - k| \le a+m\}| \ge a.$
Let $\ell^j_1$ be the smallest index such that $\widetilde{X}_j[\ell^j_1] = 0$ and 
$ |\{k : \widetilde{X}_j[k] = 0, i-(a+m) \le k < \ell^j_1 \}|=m.$  Compute $S(\widetilde{X}_j,\ell^j_1)$. 
Starting $m+1$ bits to the right of the last bit counted in  $S(\widetilde{X}_j,\ell^j_1)$,
continue scanning to the right and repeat this process, finding indices $\ell_t^j$ and computing $S(\widetilde{X}_j,\ell^j_t)$, for $t \geq 2$.

\item Set $\bar{G} =  2C'p \log n  $. For every trace $\widetilde{X}_j$, let
$I_j = \{ t : S(\widetilde{X}_j, \ell_t^j) > \bar{G}\}$.
If $|I_j|$ is not the same across all $T$ traces, the algorithm fails.
Otherwise, define $I= |I_j|$
and for all $t \in [I]$, we let $\widehat{0}_t$ be a $0$-run of length $\widetilde{\mu}_t /p$, 
for $\widetilde{\mu}_t = \frac{1}{T} \sum_{j=1}^T S(\widetilde{X}_j,\ell^j_t) $. 
    \item 
    Define $\widehat{i}_t = \frac{1}{T} \sum_{j'=1}^T i_{\ell^{j'}_t}$ and $\widehat{j}_t = \frac{1}{T} \sum_{j'=1}^T j_{\ell^{j'}_t}$, 
    for $i_{\ell^{j'}_t}$ and $j_{\ell^{j'}_t}$
     as in the definition of $S(\widetilde{X}_{j'},\ell^{j'}_t)$. 
    Let $\widehat{1}_0,\ldots,\widehat{1}_{I}$ be $1$-runs where $\widehat{1}_t$ has length $ | \widehat{i}_{t+1} - \widehat{j}_t|/p$ 
    for $t \in [I-1]$, $\widehat{1}_0$ has length $\widehat{i}_1/p$, and $\widehat{1}_{I}$ has length $|p n-\widehat{j}_{I}|/p$.
    \item Output $\widehat{X} = \widehat{1}_0\widehat{0}_1 \widehat{1}_1\cdots \widehat{1}_{I-1}\widehat{0}_{I-1} \widehat{1}_{I}$.
\end{enumerate}

\subsubsection*{Analysis}

    Let $\epsilon, p$ be fixed such that $p > 3\epsilon$, 
    and let $C'$ be fixed such that $C' \ge \frac{1000}{p}$. 
    Suppose $Y$ is a string on $n$ bits such that every $1$-run in $Y$ has length at least $C' \log(n) / \varepsilon$ and all of its $0$-runs have length either greater than $3C' \log n$ or length less than $C' \log n$. 
    Let $X$ be a string on $n$ bits that is formed by flipping at most $m = C' \epsilon \log(n)$ bits within each run of $Y$. Let
    $\widetilde{X}$ be a trace of $X$.
    A $0$-run $\mathbf{r}$ in $Y$ may have some bits flipped from 0 to 1 in $X$, becoming the substring $\mathbf{r}_X$, 
    so let $|\mathbf{r}_X^0|$ denote the number of $0$s in $\mathbf{r}_X$.  

Next, we prove several properties of $S(\widetilde{X},\ell)$ when the bit at index $\ell$ in trace $\widetilde{X}$ was from a $0$-run in $Y$ and $X$.

\begin{lemma}\lemlab{L-properties}
    Let $\widetilde{X}$ be a random trace from $X$, 
    and let $\ell$ be an index of $\widetilde{X}$ such that $\widetilde{X}[\ell]=0$. 
    If the bit at $\widetilde{X}[\ell]$ is from a 0-run $\mathbf{r}$ in $Y$, then the following holds for the quantity $S(\widetilde{X},\ell)$:
\begin{enumerate}
    \item   \propertylab{ijproperty}\label{ijproperty} (Property 1) With probability at least $1-n^{-6}$ 
    the bits at indices $i_\ell$ and $j_{\ell}$ come from a $1$-run adjacent to $\mathbf{r}$. 
    \item  \propertylab{distlemma}\label{distlemma} (Property 2)  If indices $i_\ell$ and $j_{\ell}$ come from a $1$-run adjacent to $\mathbf{r}$, then 
    $S(\widetilde{X},\ell)$ is upper bounded by a random variable from the distribution $\mathrm{Bin}(|\mathbf{r}_X^0|,p) + \mathrm{Bin}(2m,p)$.
     \item \propertylab{lowvarexp}\label{lowvarexp} (Property 3)
    If $|\mathbf{r}| \geq C' \log n$ and the bits at indices $i_\ell$ and $j_{\ell}$ come from a $1$-run adjacent to $\mathbf{r}$, then with probability at least $1-n^{-6}$, 
    $|S(\widetilde{X},\ell) - p|\mathbf{r}|| \le \frac{p |\mathbf{r}|}{4} + 3m.$
\end{enumerate}
\end{lemma}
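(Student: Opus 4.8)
The lemma has three parts, each about the quantity $S(\widetilde X,\ell)$ when $\widetilde X[\ell]$ descends from a $0$-run $\mathbf r$ of $Y$. I would prove them in order, since each one feeds the next. The unifying idea is that $S(\widetilde X,\ell)$ counts $0$s in a window of the trace delimited by the first $1$s encountered going $m+1$ steps left and right of $\ell$; I want to argue that this window sits neatly \emph{inside} the image of $\mathbf r$ (plus at most $m$ bits of each adjacent $1$-run of $Y$), so that $S$ is essentially a binomial thinning of $|\mathbf r_X^0|$.

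\textbf{Property 1.} The key point is that the adjacent $1$-runs of $Y$ are long --- length at least $C'\log(n)/\varepsilon$ --- and $X$ modifies at most $m = \varepsilon C'\log n$ bits in each run, so each such run of $X$ still contains at least $C'\log(n)/\varepsilon - m \geq (C'/\varepsilon - \varepsilon C')\log n \gg m$ ones. After the deletion channel, by a Chernoff bound the surviving $1$-run in $\widetilde X$ coming from one adjacent run of $Y$ still has at least $m+1$ ones, except with probability $n^{-\Omega(C')} \leq n^{-6}$ for $C'$ large (here I use $C' \geq 1000/p$). On that event, scanning left from $\ell$ one hits the $(m+1)$-st $1$ \emph{before} leaving the image of the run of $Y$ immediately left of $\mathbf r$ --- all intervening $1$s in the trace between $\ell$ and that run come only from bits of $\mathbf r_X$, which has at most $m$ ones (since $\mathbf r$ is all $0$s in $Y$ and only $m$ bits are flipped). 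Wait --- I need the count of $1$s strictly between $\ell$ and the left boundary to be less than $m+1$: indeed $\mathbf r_X$ has at most $m$ ones total, so this holds. Symmetrically on the right. Union-bounding over the (at most two) adjacent runs gives the claimed $1-n^{-6}$ (absorbing the factor of $2$ into the constant / using $n$ large).

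\textbf{Property 2.} Condition on the event of Property 1, i.e. $i_\ell, j_\ell$ land in the adjacent $1$-runs. Then every $0$ counted by $S(\widetilde X,\ell)$ either comes from $\mathbf r_X$ or from one of the two adjacent $1$-runs of $Y$ (with a flipped bit) lying within the window. The window extends past $\mathbf r_X$ on each side only until the $(m+1)$-st one, hence includes at most $m$ bits --- so at most $m$ zeros --- from each side's adjacent run in the \emph{trace}. Each $0$ of $\mathbf r_X$ survives independently with probability $p$, contributing $\mathrm{Bin}(|\mathbf r_X^0|,p)$; the surviving boundary zeros are stochastically dominated by $\mathrm{Bin}(2m,p)$ (each of the $\leq 2m$ relevant source bits survives w.p.\ $p$, and we only over-count). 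Adding, $S(\widetilde X,\ell) \preceq \mathrm{Bin}(|\mathbf r_X^0|,p)+\mathrm{Bin}(2m,p)$. The one subtlety is that $i_\ell, j_\ell$ are defined by the trace itself, so the window is random; but conditioning on the positions of $i_\ell,j_\ell$ (equivalently on which bits of the adjacent runs survive up to the $(m+1)$-st one) leaves the survival of the interior bits of $\mathbf r_X$ independent, which is what the domination argument needs.

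\textbf{Property 3.} Now additionally assume $|\mathbf r|\geq C'\log n$. First, $|\mathbf r_X^0|$ is within $m$ of $|\mathbf r|$ (at most $m$ flips), so $|\mathbf r_X^0|\in[|\mathbf r|-m,\,|\mathbf r|]$. Conditioned on Property 1, $S(\widetilde X,\ell)$ lies between (the survivors of $\mathbf r_X$ strictly inside the window) and (those plus $\leq 2m$ boundary survivors); the interior survivors are at least $\mathrm{Bin}(|\mathbf r_X^0|-2m,p)$ in distribution (we might miss up to $m$ zeros of $\mathbf r_X$ on each end that fall outside the window) and at most $\mathrm{Bin}(|\mathbf r_X^0|,p)$. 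So $S(\widetilde X,\ell)$ is sandwiched, up to an additive $O(m)$, between two binomials with mean $p|\mathbf r_X^0| = p|\mathbf r| \pm O(m)$. A Chernoff bound on each gives $|S - p|\mathbf r|| \leq \tfrac{p|\mathbf r|}{8} + O(m)$ except with probability $2\exp(-\Omega(p^2 |\mathbf r|)) \leq 2\exp(-\Omega(p^2 C'\log n)) \leq n^{-6}$ for $C' \geq 1000/p$. Folding the deterministic $O(m)$ slack into the stated bound $\tfrac{p|\mathbf r|}{4}+3m$ (with room to spare, choosing the Chernoff deviation as $p|\mathbf r|/8$ and bounding the boundary contribution by $2m + pm \leq 3m$) completes it.

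\textbf{Main obstacle.} The delicate point throughout is that $S(\widetilde X,\ell)$'s window endpoints $i_\ell,j_\ell$ are \emph{determined by the random trace}, so the window and the thing it measures are correlated. The right way to handle this is to reveal the trace in two stages --- first the bits of the adjacent $1$-runs of $Y$ (which pin down $i_\ell$ and $j_\ell$ and the event of Property 1), then the bits of $\mathbf r_X$ --- so that conditionally the interior count is a clean binomial. Getting this conditioning bookkeeping exactly right, and verifying that the $\leq m$ boundary bits on each side really do bound the "leakage" into adjacent runs (which in turn relies on $\mathbf r_X$ having at most $m$ ones so that the $(m{+}1)$-st one is guaranteed to be outside $\mathbf r_X$), is where all the care goes; the probability estimates themselves are routine Chernoff bounds once the combinatorial picture is fixed.
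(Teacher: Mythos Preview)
Your proposal is correct and follows essentially the same approach as the paper: Property~1 via ``at most $m$ flipped ones in $\mathbf r_X$, and a Chernoff bound guarantees $\geq m+1$ surviving ones from each adjacent $1$-run,'' Property~2 by decomposing $S$ into the $\mathrm{Bin}(|\mathbf r_X^0|,p)$ contribution from $\mathbf r$ plus at most $\mathrm{Bin}(2m,p)$ from the two flanking runs, and Property~3 by a Chernoff bound on the binomial term combined with the deterministic $\leq 2m$ and $\leq m$ slacks. Two minor remarks: your caution in Property~3 about possibly missing zeros of $\mathbf r_X$ is unnecessary (once $i_\ell,j_\ell$ lie in the adjacent runs the window contains \emph{all} of $\mathbf r_X$'s trace image, so the interior contribution is exactly $\mathrm{Bin}(|\mathbf r_X^0|,p)$, not merely a sandwich), and conversely your explicit two-stage conditioning to decouple the random window from the interior binomial is a point the paper leaves implicit.
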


\begin{proof}[Proof of Property 1]
    It suffices to prove the claim for $i_\ell$. 
    Index $i_\ell$ is $m+1$ $1$s to the left of $\ell$, and therefore not from $\mathbf{r}$, since at most $m$ $0$s of $\mathbf{r}$ were flipped to $1$s. 
    Further, by a Chernoff bound, with probability at least $1-n^{-6}$ the $1$-run left-adjacent to $\mathbf{r}$ in $Y$ has at least $2m+1$ bits surviving in $\widetilde{X}$. 
    At most $m$ bits of the left-adjacent $1$-run to $\mathbf{r}$ in $Y$ are flipped to $0$, 
    so at least $m+1$ $1$s from this $1$-run survive in $\widetilde{X}$.
    It follows that $i_\ell$ came from the left adjacent $1$-run to $\mathbf{r}$ in $Y$.
\end{proof}
\begin{proof}[Proof of Property 2]
    Recall that $|\mathbf{r}_X^0|$ is the number of $0$s in $\mathbf{r}$ that were not flipped to $1$ in $X$.
    This component of $S(\widetilde{X},\ell)$ is from the distribution $\mathrm{Bin}(|\mathbf{r}_X^0|, p)$. 
    Let the contribution to $S(\widetilde{X},\ell)$ by any $0$s not from $\mathbf{r}$ be the random variable $Z_{\mathbf{r}}(\ell)$. 
    Each bit that was flipped to $0$ in either $1$-run adjacent to~$\mathbf{r}$ in $Y$ can contribute $1$ with probability at most $p$ to $Z_{\mathbf{r}}(\ell)$. 
    From the assumption on $i_{\ell}$ and $j_\ell$, any other $0$ from $X$ will be outside of the range $[i_\ell, j_\ell]$. 
    Therefore we can upper bound the contribution of  $Z_{\mathbf{r}}(\ell)$ by a random variable sampled from $\mathrm{Bin}(2m,p)$.
\end{proof}
\begin{proof}[Proof of Property 3]
     By \propertyref{distlemma}, $S(\widetilde{X},\ell)$ is upper bounded by a random variable from the distribution 
     $\mathrm{Bin}(|\mathbf{r}_X^0|,p) + Z_{\mathbf{r}}(\ell)$. 
     By a Chernoff bound, with probability $1-n^{-6}$ the first binomial term varies from its mean by at most $p |\mathbf{r}|/4$. 
     The second binomial term is upper bounded by $2m$ and $||\mathbf{r}_X^0| - |\mathbf{r}|| \le m$.
\end{proof}







Now we are ready to prove \thmref{gaponerobust}.
\begin{proof}[Proof of {\thmref{gaponerobust}}]
Define $a := p C'\log(n)$. Take $ T= \frac{2}{p^2 \varepsilon^2} \log n $ traces of $X$, $\widetilde{X}_1,\ldots, \widetilde{X}_T$ , and
fix a trace $\widetilde{X}_j$. 
Our first goal is to find long $0$-dense substrings in $X$; we can also think of these long $0$-dense substrings as corresponding to long $0$-runs in $Y$. 
Let $i$ be the smallest index of $\widetilde{X}_j$ such that $\widetilde{X}_j[i] = 0$ and there are at least $a$ 0s in $\widetilde{X}_j$ within $a+m$ indices of $i$, i.e.
$$|\{k : \widetilde{X}_j[k] = 0, |i - k| \le a+m\}| \ge a.$$
Next find the index $\ell^j_1$ such that $\widetilde{X}_j[\ell^j_1] = 0$ and there are exactly $m$ 0s in $\widetilde{X}_j$ within the interval of indices $[ i - (a+m),\ell^j_1]$, i.e.
$|\{k : \widetilde{X}_j[k] = 0,\ i-(a+m) \le k < \ell^j_1 \}| = m.$ 
The goal of this procedure is to find an index $\ell^j_1$ such that the bit at $\widetilde{X}_j[\ell^j_1]$ 
is from a $0$-run in $Y$ with high probability.

With probability at least $1-n^{-6}$,  
every $1$-run in $Y$ is reduced to a substring with at least $2(a+m)$ $1$s in $\widetilde{X}_j$. This implies that the length $2(a+m)$ interval $\widetilde{X}_j[i-(a+m), i+a+m]$ contains bits from at most two $1$ runs in $Y$ and at most one $0$ run with probability $1-n^{-6}$. By construction, this interval contains at least $a > 3m$ $0$s (the inequality coming from the fact that $p > 3\epsilon$). Since each $1$-run had at most $m$ bits flipped to $0$, there must be at least $a-2m > m$ $0$s in the interval $\widetilde{X}_j[i-(a+m), i+a+m]$ that came from some $0$-run $\mathbf{r}$ in $Y$. In this construction, the $0$s from the $\mathbf{r}$ that survived in $\widetilde{X}_j$ are nested between at most $m$ $0$s that were flipped from the left-adjacent $1$-run to $\mathbf{r}$ in $Y$ and at most $m$ $0$s that were flipped from the right-adjacent $1$-run to $\mathbf{r}$ in $Y$. This implies that the $(m+1)$th $0$ in this interval must be from the $0$-run $\mathbf{r}$.

Compute $S(\widetilde{X}_j,\ell^j_1)$. Note that with high probability, if a trace does not have $(m+1)$ 1s to the right of $\ell^j_1$, the original string can be well-approximated by outputting the all 0s string with length $\frac{1}{T}\sum_{j=1}^T|\widetilde{X}_j|/p$. 
Starting $m+1$ bits to the right of the last bit counted in  $S(\widetilde{X}_j,\ell^j_1)$,
continue scanning to the right and repeat this process, finding indices $\ell_t^j$ and computing $S(\widetilde{X}_j,\ell^j_t)$,
for $t \geq 2$. 
We jump ahead $m+1$ bits to the right between iterations because this forces the next bit $i$ that satisfies the condition 
$|\{k : \widetilde{X}_j[k] = 0, |i - k| \le a+m\}| \ge a$ to not overlap with the previous $0$-run with high probability by \propertyref{ijproperty}.



We justify that this process succeeds, meaning that it catches all long $0$-runs from $Y$, in all $T$ traces, with high probability.
For $0$-run $\mathbf{r}$ in $Y$ such that $|\mathbf{r}| \ge 3C' \log(n)$,
with probability at least $1-n^{-6}$ 
at least $a+m$ bits from all such $0$-runs survive in all $T$ traces. 
Further there are at most $m$ $1$s among these bits. Therefore, with probability at least $1-n^{-6}$, 
we have at least $a$ $0$s that have at most $m$ $1$s inserted among them, and this triggers the calculation of $\ell^j_t$ for some $t$.


By the theorem assumptions, there exists an interval $[C' \log n, 3 C' \log n]$ such that no $0$-run $\mathbf{r}$ in $Y$ has $|\mathbf{r}|$ in the gap $[C' \log n, 3C' \log n]$. 
Let $\bar{G}$ be the middle of the gap scaled by $p$, so $\bar{G} =  2C'p \log n   $.
By \propertyref{lowvarexp} and a union bound, with probability at least $1-n^{-4}$, 
all $0$-runs $\mathbf{r}$ in $Y$ with $|\mathbf{r}| \geq 3C' \log n$ will trigger the calculation of an $\ell^j_t$ with $S(\widetilde{X}_j,\ell^j_t) > \bar{G}$
 in all traces,
and all $0$-runs $\mathbf{r}$ in $Y$ with $|\mathbf{r}| < C' \log n $ will either not trigger an $\ell^j_t$ calculation, or if they do,
 $\ell^j_t$ will have $S(\widetilde{X}_j,\ell^j_t)  < \bar{G}$ for all traces.

 For every trace $\widetilde{X}_j$, let $I_j = \{t :S(\widetilde{X}_j,\ell^j_t)> \bar{G} \}$.
If $|I_j|$ is not the same across all $T$ traces, the algorithm fails.
Otherwise let $I=|I_j|$ for all $j$, and for each trace $\widetilde{X}_j$ relabel the $\ell^j_t$ with $S(\widetilde{X}_j,\ell^j_t)> \bar{G}$ as  
$\ell_1^j, \ldots, \ell_I^j$.

The proof now proceeds similarly to that of \thmref{gapone}.
We approximate long $0$-runs $\mathbf{r}_t$ in $Y$, 
which are close to some long $0$-dense substrings of $X$ with high probability, with $0$-runs, and the rest is approximated with $1$-runs.
We first estimate the distance between the $0$-runs in $Y$. 
Consider a $0$-run $\mathbf{r}_t$ that generates an estimate of $\widetilde{\mu}^{\mathbf{r}}_t /p$, 
and take $\widehat{i}_t = \frac{1}{T} \sum_{j'=1}^T i_{\ell_t^{j'}}$ and $\widehat{j}_t = \frac{1}{T} \sum_{j'=1}^T j_{\ell_t^{j'}}$, for $i_{\ell_t^{j'}}$ and $j_{\ell_t^{j'}}$ 
as in the definition of $S(\widetilde{X}_{j'},\ell_t^{j'})$. The average of the indices
$\widehat{i}_t$ can be at most $m$ bits to the left of the first $0$ from $\mathbf{r}_t$, and therefore is at most off by $m$ bits. 
The same is true for $\widehat{j}_t$. 
By a Chernoff bound, $ | \widehat{i}_{t+1} - \widehat{j}_t|/p$  is an estimate of the distance between $0$-runs with accuracy $2\epsilon | \mathbf{r}_t|$ with probability at least $1-n^{-6}$. 
The substring between these $0$-runs also has at least a $1-\epsilon$ density of $1$s, so we can fill with $1$-runs for a good approximation. 
Let $\widehat{1}_0,\ldots,\widehat{1}_{I}$ be $1$-runs where $\widehat{1}_t$ has length $ | \widehat{i}_{t+1} - \widehat{j}_t|/p$ for $t \in [I-1]$, $\widehat{1}_0$ has length $\widehat{i}_1/p$, and $\widehat{1}_{I}$ has length $|p n-\widehat{j}_{I}|/p$. 
Hence by \lemref{substringrecon} the $1$-runs contribute at most $3 \epsilon n$ to the edit distance error.

It remains to estimate the lengths of the long $0$-runs in $Y$ $\mathbf{r}_1,\ldots,\mathbf{r}_I$.
Fix $t \in [I]$, let $\widehat{0}_t$ be a $0$-run of length $\widetilde{\mu}^{\mathbf{r}}_t /p$,
 for $\widetilde{\mu}^{\mathbf{r}}_t = \frac{1}{T} \sum_{j=1}^T S(\widetilde{X}_j,\ell^j_t)$.
 For every $\mathbf{r}_t \in \{\mathbf{r}_1,\ldots,\mathbf{r}_I\}$, define ${\mathbf{r}_t}_X^0$ as above (the number of $0$s from $\mathbf{r}_t$ in $X$). With probability at least $1-n^{-6}$ 
 the average of $\mathrm{Bin}(|{\mathbf{r}_t}_{X}^0|, p)$
over $T=O(\log(n)/\epsilon^2)$ traces is within $\epsilon p|{\mathbf{r}_t}_{X}^0|$ of the mean $p|{\mathbf{r}_t}_{X}^0|$. 
Combining this with \propertyref{distlemma}, with probability at least $1-n^{-3}$,
$$|\widetilde{\mu}^{\mathbf{r}}_t - p|{\mathbf{r}_t}_X^0|| \le \epsilon p|{\mathbf{r}_t}_X^0| + 2m.$$
Since $|{|\mathbf{r}_t}_X^0| - |\mathbf{\mathbf{r}_t}|| \le m$,
we have that 
$$|p|\mathbf{\mathbf{r}_t}| - \widetilde{\mu}^{\mathbf{r}}_t | \le \epsilon p |\mathbf{\mathbf{r}}_t| + 2m + pm = \epsilon p|\mathbf{\mathbf{r}}_t| + 3m.$$
This is at worst an approximation of $p|\mathbf{\mathbf{r}}_t|$ with edit distance error  at most
$$\epsilon + \frac{3m}{p|\mathbf{\mathbf{r}}_t|} \le \epsilon + \frac{3m}{p(a-2m)} \le \epsilon + \frac{9\epsilon}{p^2} \le C''\epsilon,$$
where we use $a > 3m$ and $C'' = 1 + \frac{9}{p^2}$.
Taking a union bound over all ${\mathbf{r}}_t \in \{\mathbf{r}_1, \ldots,\mathbf{r}_I\}$, and applying \lemref{substringrecon}, with probability at least $1-n^{-2}$ the long $0$-run estimates contribute at most error $C''\epsilon n$. Putting this all together,  we output the string $\widehat{X} = \widehat{1}_0\widehat{0}_1 \widehat{1}_1 \cdots \widehat{1}_{I-1}\widehat{0}_{I-1} \widehat{1}_{I}$.
One more application of \lemref{substringrecon} implies that $\de(Y,\widehat{X}) \leq (C'' +3) \varepsilon n$.
Since $Y$ is within $\epsilon n$ edit distance from $X$, the triangle inequality lets us conclude that
$\de(X,\widehat{X}) \leq (C'' +4)\epsilon n $.

    If we apply this algorithm and analysis with $\frac{\epsilon}{ C''+4}$ instead of $\epsilon$, the result follows. 
    Constants were taken large enough to account for this factor of $C''+4$.
\end{proof}

As before, the theorem holds when the constant $C'$ is unknown. Given $T = O(\log n / \varepsilon^2)$ traces of $X$, we can find whether $X$ has a gap, and the corresponding $C'$ value, with high probability.

\subsection{Majority voting in substrings}

A natural follow-up question to the previous theorems is what happens when the string no longer has long runs, but instead has long dense regions. 

\singletracetwo*

\subsubsection*{Algorithm for majority voting in substrings}

\begin{enumerate}
\item[{\bf Set-up:}] String $X$ on $n$ bits such that $X$ can be divided into contiguous intervals all of length at least $L=50 \log n / (p^2\varepsilon^2)$ and density at least $1-\frac{\epsilon}{12}$ of $0$s or $1$s.
\item Sample a single trace $\widetilde{X}$ from the deletion channel with probability $q$.
    \item Uniformly partition $\widetilde{X}$ into contiguous substrings of length $w =\epsilon p L$, so $\widetilde{X} = \widetilde{X}_1\cdots \widetilde{X}_{\lceil n/w \rceil}$, with a shorter last interval if needed. 
    \item Output $\widehat{X} = \widehat{X}_1 \cdots \widehat{X}_{\lceil n/w \rceil}$, where $\widehat{X}_i$ is a run of length $w/p$ with value the majority bit of $\widetilde{X}_i$ for $i \in [\lceil n/w \rceil]$. 
\end{enumerate}

\subsubsection*{Analysis}

We first present three properties of the traces generated by high density strings with large length.

\begin{lemma}
\lemlab{highdensity1}

Fix $\epsilon$ and $p$. Let $X$ be a string on at least $L$ bits, where $L = \frac{50}{p^2\epsilon^2}\log(n)$
    with density of at least $1-\epsilon$ of either $0$ or $1$. For a trace $\widetilde{X}$ of $X$, the following properties hold with probability at least $1-n^{-4}$.
\begin{enumerate}
    \item \propertylab{atleasthalf}\label{atleasthalf}(Property 1)\quad $\frac{|\widetilde{X}|}{L} \ge \frac{p}{2}$

    \item  \propertylab{lengthhighdensity}\label{lengthhighdensity}(Property 2)\quad  $\Big||X| - \frac{|\widetilde{X}|}{p}\Big| \le \epsilon |X|$.

    \item \propertylab{densityhighdensity}\label{densityhighdensity}(Property 3)\quad $\widetilde{X}$ has density at least $1-2\epsilon$ of $0$s. 

\end{enumerate}
\end{lemma}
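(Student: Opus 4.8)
The plan is to prove all three properties by elementary concentration, exploiting that the required length $L = 50\log(n)/(p^2\epsilon^2)$ is taken large enough that every Chernoff tail we invoke is at most $n^{-4}$ (by the choice of the constant in $L$), and then to take a union bound over the at most three bad events. Assume without loss of generality that $X$ has density at least $1-\epsilon$ of $0$s, so the number of $1$s in $X$ satisfies $n_1 \le \epsilon|X|$, and write $\widetilde n_0,\widetilde n_1$ for the numbers of $0$s and $1$s in the trace, so that $|\widetilde{X}| = \widetilde n_0 + \widetilde n_1$.

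For \propertyref{lengthhighdensity}, I would note that $|\widetilde{X}| \sim \mathrm{Bin}(|X|,p)$ has mean $p|X|$, so a multiplicative Chernoff bound gives $\Prob{\big| |\widetilde{X}| - p|X| \big| > \epsilon p|X|} \le 2\exp(-\epsilon^2 p|X|/3)$; since $|X| \ge L$ the exponent is at least $\tfrac{50}{3p}\log(n)$, which makes this probability at most $n^{-4}$. Dividing through by $p$ is exactly \propertyref{lengthhighdensity}, and \propertyref{atleasthalf} then comes for free on the same event, since $|\widetilde{X}|/L \ge |\widetilde{X}|/|X| \ge p(1-\epsilon) \ge p/2$ as long as $\epsilon \le 1/2$.

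The one step that needs a little care is \propertyref{densityhighdensity}. The crude bound $\widetilde n_1 \le n_1 \le \epsilon|X|$ together with $|\widetilde{X}| \ge p(1-\epsilon)|X|$ only yields a $1$-density of at most $\epsilon/(p(1-\epsilon))$, which is not $\le 2\epsilon$ when $p$ is small; so instead I would concentrate $\widetilde n_1$ directly. Since $\widetilde n_1 \sim \mathrm{Bin}(n_1,p)$ has mean $pn_1 \le \epsilon p|X|$, an additive Chernoff bound with deviation $\epsilon p|X|/2$ gives $\Prob{\widetilde n_1 > pn_1 + \epsilon p|X|/2} \le \exp\!\big(-\Omega(\epsilon p|X|)\big) \le n^{-4}$, and — crucially — this holds uniformly in $n_1$, however small. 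On this event $\widetilde n_1 \le \tfrac{3}{2}\epsilon p|X|$, and combining with $|\widetilde{X}| \ge p(1-\epsilon)|X|$ from the previous paragraph, the density of $1$s in $\widetilde{X}$ is at most $\tfrac{3\epsilon}{2(1-\epsilon)} \le 2\epsilon$ for $\epsilon$ below a small absolute constant (e.g.\ $\epsilon \le 1/4$). A final union bound over the bad events, with the constant $50$ in $L$ leaving ample slack, completes the proof. I expect the main (mild) obstacle to be exactly this last point: bounding the minority-bit count of the trace by that of $X$ is too lossy for small retention probability, so one must argue that $\widetilde n_1$ concentrates around $pn_1$ with an \emph{additive} deviation on the scale of $\epsilon p|X|$ rather than on the scale of its own mean, which is what keeps the density bound robust to $n_1$ being tiny.
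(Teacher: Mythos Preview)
Your proposal is correct and follows the same overall structure as the paper: Chernoff on $|\widetilde{X}|$ for Properties~1 and~2, a separate concentration step for Property~3, and a union bound. The one place where you diverge is Property~3. The paper concentrates the \emph{majority} bit count from below (showing $\widetilde n_0 \ge p|X|(1-\epsilon) - \sqrt{3|X|\log n}$), combines this with the upper bound on $|\widetilde{X}|$, and evaluates the resulting ratio at $|X|=L$ to get $\rho \ge \frac{50(1-\epsilon)-\sqrt{150}\,\epsilon}{50+\sqrt{150}\,\epsilon} \ge 1-2\epsilon$. You instead concentrate the \emph{minority} bit count from above via an additive deviation on scale $\epsilon p|X|$, which neatly sidesteps the issue that $n_1$ may be too small for a multiplicative Chernoff bound on $\widetilde n_1$ to be useful. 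Both routes are equally elementary; the paper's version avoids your mild side condition $\epsilon \le 1/4$, while yours makes the robustness to small $n_1$ more explicit.
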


\begin{proof}
Assume w.l.o.g. that $X$ has density at least $1-\epsilon$ of $0$. Applying a Chernoff bound gives that with probability at least $1-n^{-6}$, 
the length of $\widetilde{X}$ is in the range $p|X| \pm \sqrt{3|X|\log(n)}$. Taking this lower bound gives $\frac{|\widetilde{X}|}{|X|} \ge p - \frac{\sqrt{3|X|\log(n)}}{|X|}$.  Since $|X| \ge L$, we see that  $\frac{\sqrt{3|X|\log(n)}}{|X|} \leq p/2$, completing the proof of \propertyref{atleasthalf}. 
Another way of writing the same Chernoff bound result is that $\left ||X| - \frac{|\widetilde{X}|}{p}\right| \le \sqrt{3|X|\log(n)} \le \epsilon |X|$, proving  \propertyref{lengthhighdensity}.

Applying a Chernoff bound to the number of $0$s in $X$, 
with probability at least $1-n^{-6}$, the number of non-deleted $0$s is at least 
$p|X|(1-\epsilon) - \sqrt{3|X|(1-\epsilon)\log(n)} \ge p|X|(1-\epsilon) - \sqrt{3|X|\log(n)}$. 
Combining this with the first application of a Chernoff bound, a union bound gives that with probability at least $1-n^{-5}$, the density of $0$s in the trace 
(denoted $\rho$) satisfies the following inequalities:
    \begin{align*}
        \rho \ge \frac{p|X|(1-\epsilon) - \sqrt{3|X|\log(n)}}{p|X| + \sqrt{3|X|\log(n)}} 
          \ge \frac{50(1-\epsilon) - \sqrt{150}\epsilon}{50 + \sqrt{150}\epsilon} 
        \ge 1 - 2\epsilon. 
    \end{align*}
Note that the second inequality comes from the fact that the expression to the left is increasing in $|X|$, and therefore is minimized at $|X| = L$.
\end{proof}

Using these results, we can now proceed to the main proof of this section.

\begin{proof}[Proof of {\thmref{singletracetwo}}]
    Suppose $X$ is a binary string on $n$ bits that can be divided into intervals $I_1,\ldots,I_m$ such that all intervals $I_i$ have length at least $L := \frac{50}{p^2\epsilon^2}\log(n)$ and density at least $1-\epsilon$ of either $0$ or $1$. Take a trace $\widetilde{X}$. Define $w = \epsilon p L$. Divide the trace $\widetilde{X}$ into consecutive intervals of width $w$ denoted as $\widetilde{X}_1,\ldots,\widetilde{X}_k$, 
    where $\widetilde{X}_i = \widetilde{X}[(i-1)w, iw]$ (with $\widetilde{X}_k$ shorter if necessary). 
    
    Our approximate string is $\widehat{X} = \widehat{X}_1 \cdots \widehat{X}_{k}$, where $\widehat{X}_k$ is a run of length $w/p$ with value the majority bit of $\widetilde{X}_i$ for $i \in [k]$, and 
    define $X_i$ to be the range of bits in $X$ that correspond to the bits in $\widetilde{X}_i$. Define $\widetilde{I}_i$ as the bits present in $\widetilde{X}$ from the interval $I_i$ in $X$.

    Consider $I_i$ for some $i$ that w.l.o.g. has majority bit $0$. By \propertyref{densityhighdensity} of \lemref{highdensity1} , at most $2\epsilon|\widetilde{I}_i|$ bits in $\widetilde{I}_i$ are $1$. Consider all intervals $\widetilde{X}_j$ such that $\widetilde{X}_j \subset \widetilde{I}_i$. There are at least $\frac{|\widetilde{I}_i|- 2w}{w}$ such intervals $\widetilde{X}_j$. At most $\frac{2\epsilon|\widetilde{I}_i|}{\frac{w}{2}} = \frac{4\epsilon|\widetilde{I}_i|}{w}$ of these intervals $\widetilde{X}_j$ can have majority bit $1$. Therefore, the fraction of these intervals that have majority bit $1$ is upper bounded by the following for $\epsilon \le \frac{1}{8}$, where we use \propertyref{atleasthalf} of \lemref{highdensity1} to say that $|\widetilde{I_i}| \ge \frac{pL}{2}$:
    
    $$\frac{4\epsilon}{1-2\frac{w}{|\widetilde{I_i}|}} \le \frac{4\epsilon}{1-4\epsilon} \le 8\epsilon$$
    
    Thus, in the concatenation of $\frac{w}{p}$ of the majority bits of all $X_j$ such that $X_j \subset I_i$, the fraction of $1$s is at most $8\epsilon$. Furthermore, the length of this concatenation is within $\epsilon |I_i| + \frac{2|w|}{p}$ of $|I_i|$, where the first term comes from \propertyref{lengthhighdensity} in \lemref{highdensity1} and the second term comes from the two intervals $X_j$ that could cross both $I_i$ and either $I_{i-1}$ or $I_{i+1}$. This approximation of $I_i$ therefore has density at least $1-8\epsilon$ of $0$ and length differing by a fraction of $\epsilon + \frac{2|w|}{p|I_i|} \le 3\epsilon$ from $I_i$. Therefore, this is a total of a $11\epsilon$ approximation of $I_i$. This is true for all $i$.
    
    The last error that needs to be considered in our algorithm is the bits from $\widetilde{X}_j$ for all $j$ such that $X_j \not\subset I_i$ for all $i$ (in other words $X_j$ is on a boundary). We can assume that the bits in the output string from these $\widetilde{X}_j$ are all errors, and there are at most $\frac{n}{L}$ such boundaries. Therefore, this contributes a total error of $\frac{w}{p}\frac{n}{L} \le \epsilon n$ bits.
Putting it all together with \lemref{substringrecon},
$ \de(X,\widehat{X}) \leq 12\epsilon n$. 
If we apply this algorithm and analysis with $\epsilon/12$ instead of $\epsilon$, the result follows.
\end{proof}

\section{Lower bounds for approximate reconstruction}
\seclab{lower-bounds}

We turn our attention to proving limitations of approximate reconstruction. We provide two results, one for edit distance approximation and another for Hamming distance. Throughout this section we fix the deletion probability to be a constant $q = \Theta(1)$.

\subsection{Lower bound for edit distance approximation}

Let $\alpha \in (0,1/2)$ denote a fixed constant. Let $f(n')$ be a lower bound on the number of traces required to distinguish between two length $n'$ strings $X'$ and $Y'$ with probability at least $1/2 + \alpha$. We can take $\alpha$ to be as small as we like by slightly decreasing the lower bound, and therefore, we assume that $\alpha = 1/8$.  Previous work identifies two strings such that $f(n) = \widetilde{\Omega}(n^{1.5})$, where the $\widetilde{\Omega}$ hides the $1/\mathrm{polylog}(n)$ factor~\cite{Chase19, HL18}.
They use $X' = (01)^k 1 (01)^{k+1}$ and $Y' = (01)^{k+1}1(01)^k$ for $n' = 4k + 3$. Our strategy holds for any family of pairs $X', Y'$ that witness the lower bound. However, we note that this specific pair is already close in edit distance, and hence, outputting either of them would always be an approximation within edit distance two. 

We instead form a string $V$ of length $n$ by concatenating a sequence of blocks, where each {\em block} is a uniformly random choice between $X'$ and $Y'$. Setting the block length to be $C/\epsilon$, we show that any algorithm that approximates $V$ within edit distance~$\epsilon n$ must require at least $f(C/\epsilon)$ traces for a constant $C \in (0,1)$. Our strategy follows previous results on exact reconstruction lower bounds that argue based on traces being independent of the choice of string in each block~\cite{Chase19,HL18, tr-revisited}. However, the proof is not a straightforward extension because we must account for the algorithm being approximate. In essence, we argue that if the algorithm outputs a good enough approximation, then it must be able to distinguish between $X',Y'$ in many blocks.

\subsubsection*{Input Distribution and Indistinguishable Blocks}
We define the hard distribution as follows. Let $X'$ and $Y'$ be strings of length $1/\lceil 128 \epsilon \rceil$. We construct a random string $V$ of length $n$ by concatenating $b = \lceil 128 \epsilon \rceil n$ blocks $V = V_1V_2\cdots V_b.$ Each of the substrings $V_i$ is set to be $X'$ or $Y'$ uniformly and independently. 
The approximate reconstruction algorithm receives $T < f(C/\epsilon)$ traces for $C = 1/128$.  
By assumption, with $T$ traces from $X'$ or $Y'$, the algorithm must fail to distinguish between them with probability at least $1/2 - \alpha$. As this is an information-theoretical statement, 
we next argue that 
the $T$ traces are independent of the choice between $X'$ and $Y'$ with probability at least $1-2\alpha$. 

To formalize this claim, we introduce some notation. Let $\mathcal{A}$ denote a set of $T < f(C/\epsilon)$ traces generated from the random string $V$ described above by passing $V$ through the deletion channel $T$ times. 
Since the channel deletes bits independently, we can equivalently determine the set $\mathcal{A}$ of traces  by passing each block $V_i$ for $i \in [b]$ through the channel one at a time and then concatenating the subsequences to form a trace from $V$. We let $\mathcal{D}_i$ denote the distribution over sets of $T$ traces where $V_i$ generates these traces. By our assumption, any algorithm that receives $T < f(C/\epsilon)$ traces must fail to distinguish between $V_i = X'$ and $V_i = Y'$ with probability at least $1/2 - \alpha$. 

Next, we decompose the trace distribution $\mathcal{D}_i$ in a way that relates the failure probability to the event that the $T$ traces are independent of $V_i$. We express the distribution $\mathcal{D}_i$ over $T$ traces of $V_i$ as a convex combination of two distributions  $\mathcal{F}$ and $\mathcal{G}_{V_i}$, where intuitively sampling from $\mathcal{F}$ corresponds to being unable to determine $V_i$ with any advantage (see \defref{indistinguish} below). Formally, we take $\mathcal{F}$ and $\mathcal{G}_{V_i}$ to be any distributions over $T$ traces of $V_i$ such that for some $\gamma \in [0,1]$ we have
\begin{equation}\eqlab{convex-distribution}
\mathcal{D}_i = (1-\gamma) \cdot\mathcal{F}  + \gamma \cdot \mathcal{G}_{V_i},
\end{equation}
where $\mathcal{G}_{V_i} = \frac{1}{2}(\mathcal{G}_{X'} + \mathcal{G}_{Y'})$,
and moreover, the following three properties hold: (i) $\mathcal{F}$ is independent of $V_i$, (ii) $\mathcal{G}_{V_i}$ is not independent of whether $V_i=X'$ or $V_i = Y'$, and (iii) the distributions $\mathcal{G}_{X'}$ and $\mathcal{G}_{Y'}$ have disjoint supports.  We sketch how to construct distributions as in \Eqref{convex-distribution}. The distribution $\mathcal{D}_i$ from $V_i$ is discrete over the $T$-wise product of distributions over $\{0,1\}^{\leq n}$. Depending on $V_i$, the distribution gives different weights to each subsequence based on its length and the number of times it is a subsequence of $V_i$. 
Assume that some $T$ traces have higher probability of occurring under $X'$ than $Y'$. Assign the mass in $\mathcal{D}_i$ that comes from $Y'$ to $\mathcal{F}$ and the remainder to $\mathcal{G}_{X'}$ (if the probability is higher for $Y'$, swap $X'$ and $Y'$). Doing this for all multisets of $T$ subsequences leads to $\mathcal{G}_{X'}$ and $\mathcal{G}_{Y'}$ having disjoint support. The parameter $\gamma$ normalizes the distributions.

We now argue that $\gamma \leq 2\alpha$ by claiming that there is an algorithm using $T$ traces with failure probability at most $(1-\gamma)/2$. By our hypothesis, with $T < f(C/\epsilon)$ traces, any algorithm has failure probability at least $1/2 - \alpha$. This implies that $(1-\gamma)/2 \geq 1/2 - \alpha$, which leads to $2\alpha \geq \gamma$.  Since $\mathcal{G}_{X'}$ and $\mathcal{G}_{Y'}$ have disjoint supports, the traces from these distributions identify $V_i$, and the algorithm correctly determines $V_i$. From \Eqref{convex-distribution}, with probability $\gamma$, the traces are sampled from $\mathcal{G}_{X'}$ or $\mathcal{G}_{Y'}$. Otherwise, with probability $(1-\gamma)$, traces are sampled from $\mathcal{F}$. When traces come from $\mathcal{F}$, an algorithm that outputs either $X'$ or $Y'$ has probability $1/2$ of being correct.





Now, define a binary latent variable $\mathcal{E}_i$ such that $\mathcal{E}_i = 1$ with probability $1-\gamma$ and $\mathcal{E}_i = 0$ with probability $\gamma$. If $\mathcal{E}_i = 1$, then $\mathcal{D}_i$ samples $T$ traces from $\mathcal{F}$, and if $\mathcal{E}_i = 0$, it samples from $\mathcal{G}_{V_i}$. Using this notation, we can define the event that the traces are independent of a block in $V$. Recall that we sample $T$ traces $\mathcal{A}$ from $V$ by sampling $T$ traces from $\mathcal{D}_i$ for each $i \in [b]$ and then concatenating the traces of the blocks (using an arbitrary but fixed ordering of the traces).
\begin{defn}\deflab{indistinguish} For $i \in [b]$, we say that the $i^\mathrm{th}$ block is {\bf indistinguishable} from the $T$ traces $\mathcal{A}$ of~$V$ if the distribution $\mathcal{D}_i$ samples the traces of the $i^\mathrm{th}$ block $V_i$ from $\mathcal{F}$, or in other words, if $\mathcal{E}_i = 1$.
\end{defn}
\begin{lemma} \lemlab{indistinguishable}
	If $\alpha = 1/8$ and the number of blocks $b$ satisfies $b \geq  128 \log n$, then at least $(1-4\alpha)b$ blocks are indistinguishable with probability at least $1-2/n^2$.
\end{lemma}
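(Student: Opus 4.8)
The plan is to reduce the claim to a single Chernoff bound on a sum of independent Bernoulli variables. First I would observe that the latent variables $\mathcal{E}_1,\ldots,\mathcal{E}_b$ are mutually independent: the deletion channel acts on the $b$ blocks of $V$ independently, and because each block $V_i$ is a uniform choice between the \emph{same} pair $X',Y'$, the distribution $\mathcal{D}_i$ — and hence its decomposition in \Eqref{convex-distribution} together with the associated parameter $\gamma$ — is identical for every $i$. Thus each $\mathcal{E}_i$ is an independent Bernoulli random variable with $\Pr(\mathcal{E}_i = 1) = 1-\gamma$, and we have already established $\gamma \le 2\alpha$, so its success probability is at least $1-2\alpha$.

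Next I would let $N = \sum_{i=1}^b \mathcal{E}_i$ count the indistinguishable blocks. Since $N$ stochastically dominates a $\mathrm{Bin}(b,1-2\alpha)$ variable, it suffices to bound the lower tail of the latter, whose mean is $(1-2\alpha)b$. Plugging in $\alpha = 1/8$, the threshold $(1-4\alpha)b = b/2$ is exactly a $(1-\tfrac13)$-fraction of this mean, so the multiplicative Chernoff bound yields $\Pr\!\big(N < (1-4\alpha)b\big) \le \exp\!\big(-(1/3)^2 \cdot (3b/4)/2\big) = \exp(-b/24)$.

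Finally I would substitute the hypothesis $b \ge 128\log n$; since $\log$ differs from $\ln$ only by a constant factor depending on $q = \Theta(1)$, this makes $\exp(-b/24) \le 2/n^2$ after choosing the absolute constant ($128$, or a slightly larger one) appropriately, which gives the stated probability bound.

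The point requiring the most care — rather than a genuine obstacle — is the independence of the $\mathcal{E}_i$ across blocks together with the fact that the decomposition in \Eqref{convex-distribution}, and hence $\gamma$, is the same for every block; once that is in place, the rest is a routine concentration estimate. A secondary bookkeeping matter is matching the Chernoff exponent against the target failure probability $2/n^2$ given that the base of $\log$ in the statement is $1/q$ rather than $e$, which only affects the choice of absolute constants.
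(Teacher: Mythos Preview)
Your proposal is correct and follows essentially the same approach as the paper: observe that the $\mathcal{E}_i$ are independent Bernoulli variables with $\Pr(\mathcal{E}_i=1)\ge 1-2\alpha$ (from $\gamma\le 2\alpha$), then apply a multiplicative Chernoff bound to $\sum_i\mathcal{E}_i$ with mean $(1-2\alpha)b=3b/4$ and deviation down to $(1-4\alpha)b=b/2$. Your added remarks about stochastic domination and the base of the logarithm are careful touches the paper glosses over, but the core argument is identical.
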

\begin{proof}
Using the notation and arguments by \Eqref{convex-distribution}, we have that $\gamma \leq 2\alpha$, which implies that $\mathcal{E}_i = 1$ with probability at least $(1-2\alpha)$. Hence, the expected number of  indistinguishable blocks is at least $(1-2\alpha)b$. Since traces are generated for each block independently, the binary random variables $\{\mathcal{E}_i\}_{i=1}^b$ are independent. By a Chernoff bound, the probability that the number of indistinguishable blocks deviates from its mean by $2\alpha b$ is at most $2e^{-4\alpha^2(1-2\alpha)b/3} \leq 2 e^{-2\log n} = 2 n^{-2}$, where we have used that $(1-2\alpha) = 3/4$ and $\alpha^2 b \geq (128/64)\log n =  2 \log n$.    
\end{proof}

\subsubsection*{From Indistinguishable Blocks to Edit Distance Error}

We move on to a technical lemma that allows us to lower bound the edit distance by looking at the indicator vectors for the agreement of substrings in an optimal alignment. In what follows, we consider partitions into substrings, which are collections of non-overlapping, contiguous sequences of characters (a substring may be empty; substrings in a partition may have varying lengths). 

 \begin{lemma} 
 \lemlab{block-ed-lb}
 Let $U$ and $V$ be strings. For an integer $b \geq 1$, assume that $V$ is partitioned into $b$ substrings $V = V_1V_2\cdots V_b$. Then, there exists a partition of $U$ into $b$ substrings $U = U_1U_2\cdots U_b$ such that\footnote{
It is tempting to conjecture that equality can be achieved in \lemref{block-ed-lb} if we instead take the minimum over all partitions of $U$. However, an example shows that this does not always hold. Over the alphabet $\{\mathsf{x},\mathsf{y},\mathsf{z}\}$, consider the pair $U = \mathsf{y}\mathsf{z}\mathsf{z}\mathsf{z}\mathsf{x}$ and $V = \mathsf{x}\mathsf{y}\mathsf{y}\mathsf{x}$. Their edit distance is $\de(U,V)=4$. Using four blocks, partition $V = [\mathsf{x}][\mathsf{y}][\mathsf{y}][\mathsf{x}]$. Decompose $U = [\varnothing][\mathsf{y}][\mathsf{z}\mathsf{z}\mathsf{z}][\mathsf{x}]$. Summing the indicator vectors only equals two, and not four.
}
 $$ \de(U,V) \geq \sum_{i=1}^b \mathds{1}_{\{U_i \neq V_i\}}.$$
 \end{lemma}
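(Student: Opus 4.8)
The plan is to take an optimal edit-distance alignment between $U$ and $V$, and to use the given partition of $V$ to induce a partition of $U$ by ``cutting $U$ at the images of the cut points of $V$.'' Concretely, an optimal alignment can be viewed as a sequence of edit operations (insertions, deletions, substitutions) together with a monotone partial matching of positions of $U$ to positions of $V$; equivalently, it is a common supersequence-style trace, or a monotone lattice path in the $(|U|+1)\times(|V|+1)$ edit graph. The boundary between $V_i$ and $V_{i+1}$ corresponds to some position in $V$; I follow the alignment path to the corresponding position in $U$, and declare that to be the boundary between $U_i$ and $U_{i+1}$. (If several consecutive $V$-boundaries map to the same $U$-position, the intermediate $U_j$ are empty, which is allowed.) This produces a partition $U = U_1 U_2 \cdots U_b$ in which, for each $i$, the alignment restricted to the block pair $(U_i, V_i)$ is itself a valid alignment transforming $U_i$ into $V_i$, and the total cost of the global alignment is the sum of the costs of these $b$ sub-alignments — the point being that the cut points are chosen so that no single edit operation straddles two blocks.

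The key steps, in order, are: (1) fix an optimal alignment realizing $\de(U,V)$ and formalize it as a monotone staircase path in the edit grid, so that ``the $U$-position aligned to a given $V$-position'' is well defined; (2) use the $b-1$ internal cut points of $V$ to read off $b-1$ (weakly increasing) cut points of $U$, defining the blocks $U_i$; (3) observe that the path decomposes as a concatenation of $b$ sub-paths, the $i$th going from the corner of block $i$'s box to the next corner, and that the global cost is exactly the sum of the $b$ sub-path costs, i.e. $\de(U,V) = \sum_{i=1}^b c_i$ where $c_i$ is the cost of the sub-alignment of $U_i$ into $V_i$; (4) note $c_i \ge \de(U_i, V_i) \ge \mathds{1}_{\{U_i \ne V_i\}}$, since any alignment of $U_i$ to $V_i$ costs at least the edit distance, which is at least $1$ whenever the two strings differ; (5) sum over $i$ to conclude $\de(U,V) \ge \sum_{i=1}^b \mathds{1}_{\{U_i \ne V_i\}}$.

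I expect the main obstacle to be step (3): making precise, without a heavy formal apparatus, the claim that an optimal alignment ``splits cleanly'' at the induced cut points, i.e. that the global cost is the sum of the per-block costs rather than merely $\ge$ that sum. The cleanest way to handle this is to work with the lattice-path / dynamic-programming formulation of edit distance: a path from $(0,0)$ to $(|U|,|V|)$ using unit right/down/diagonal steps, with right and down steps costing $1$ (insert/delete) and diagonal steps costing $1$ if the corresponding characters differ and $0$ otherwise; then an optimal path has cost $\de(U,V)$, and passing through the grid points $(a_i, |V_1\cdots V_i|)$ for the induced $U$-cut points $a_i$ automatically decomposes the path additively, each segment being a path in the sub-box corresponding to $(U_i, V_i)$. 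There is a minor subtlety about how to pick $a_i$ when the path moves horizontally across the column $x = |V_1 \cdots V_i|$ (the path may touch several lattice points with that $y$-coordinate); any consistent choice, e.g. the leftmost such point, works, and making consecutive choices consistent is exactly what keeps the blocks non-overlapping and the decomposition additive. Everything else is routine: once the additive decomposition is in hand, steps (4)–(5) are immediate from the definition of edit distance and the triangle-free observation that distinct strings have edit distance at least one.
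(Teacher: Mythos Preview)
Your proposal is correct and takes essentially the same approach as the paper: both fix an optimal alignment between $U$ and $V$ and use it to induce the partition of $U$, then observe that the alignment cost decomposes across blocks and that each block with $U_i \neq V_i$ contributes at least one edit. The paper presents this inductively, peeling off one block at a time (with a shortcut when $V_1$ happens to match a prefix of $U$), whereas you do all $b$ cuts at once via the lattice-path formulation; these are two packagings of the same argument.
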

 \begin{proof}
 Let $d = \de(U,V)$. We proceed by induction on the number of substrings $b$. For the base case, $b=1$, we have that the edit distance between $U$ and $V$ is zero if and only if $U = V$. For the inductive step, assume the lemma holds up to $b-1$ substrings with $b \geq 2$. We consider two cases, where in both we will split $U$ into two substrings $U = U_1U'$. 

 For the first case, assume that $V_1$ matches the prefix of $U$, so that $U = U_1U' = V_1U'$. Then, we have that $\de(U,V) = \de(U',V_2 \cdots V_b)$. Applying the inductive hypothesis with $b-1$ substrings for the pair $U'$ and $V_2 \cdots V_b$ finishes this case. 

 For the second case, $V_1$ does not match the prefix of $U$, and hence, any minimum edit distance alignment between $U$ and $V$ uses at least one edit in the $V_1$ portion. Consider any alignment between $U$ and $V$ with $d = \de(U,V)$ edits. Let $U = U_1U'$ denote the partition where $U_1$ is aligned to $V_1$ and $U'$ is aligned to $V_2\cdots V_b$. Since the prefixes differ, we have $\de(U_1,V_1) \geq 1$, which implies that $\de(U',V_2\cdots V_b) \leq d-1$. Applying the inductive hypothesis with $b-1$ substrings to the pair $U'$ and $V_2\cdots V_b$ leads to a partition $U' = U_2\cdots U_b$ such that $\sum_{i=2}^b \mathds{1}_{\{U_i \neq V_i\}} \leq d-1$. 
 We conclude that $\de(U,V) = d = 1 + (d-1) \geq \mathds{1}_{\{U_1 \neq V_1\}} + \sum_{i=2}^b \mathds{1}_{\{U_i \neq V_i\}}$ for this partition of $U$.
 \end{proof}

Using the above lemmas, we can now prove the edit distance lower bound theorem.

\blackboxlowerbound*
\begin{proof}
Let $\epsilon^\star$ be a small constant such that $\epsilon \leq \epsilon^\star < C$ and $f(C/\epsilon^\star) > 1$, where we set $C = 1/128$.  Assume that the approximate reconstruction algorithm receives $T < f(C/\epsilon)$ traces.  

Let $\widehat X$ denote the output of the reconstruction algorithm on input $V = V_1V_2\cdots V_b,$ where $V_i \in \{X', Y'\}$ and $b = \lceil 128 \epsilon \rceil n$. 
Assume for contradiction that $\de(\widehat X, V) \leq \epsilon n$ with high probability. 
Using \lemref{block-ed-lb}, we can partition $\widehat X$ into $b$ blocks $\widehat X =\widehat X_1 \widehat X_2 \cdots \widehat X_b$  such that
\begin{equation}\Eqlab{lb-edit-sum}
\de(\widehat X, V) \geq \sum_{i=1}^b \mathds{1}_{\{\widehat X_i \neq V_i\}}.
\end{equation}
Since $b \geq 128 \log n$, \lemref{indistinguishable} establishes that there are at least $(1-4\alpha)b$ blocks in $V$ that are indistinguishable with high probability using the $T$ traces. For each of these blocks, the algorithm cannot guess between $V_i = X'$ or $V_i = Y'$ with any advantage. While we do not know how the alignment corresponds to the indistinguishable blocks, we know that for at least $(1-4\alpha)b$ values $j \in [b]$, we have that $\{\widehat X_j \neq V_j\}$ with probability at least 1/2. 
Thus, the sum in \Eqref{lb-edit-sum} is at least $\frac{1}{2}(1-4\alpha)b = b/4$ in expectation, and by a Chernoff bound, it is at least $b/8$ with high probability. This implies that $\de(\widehat X, V) \geq 16\epsilon n$, contradicting the edit distance being at most $\epsilon n$.
\end{proof}

\corref{LBCorollary} now follows immediately from this theorem and the previous trace reconstruction lower bounds~\cite{Chase19}, showing that for $\delta \in (0,1/3)$, we have that
$n^{1 + 3\delta/2}/\mathrm{polylog}(n)$ 
traces are necessary to $n^{1/3 - \delta}$-approximately reconstruct an arbitrary $n$-bit string with probability $1-1/n$.

\subsection{Lower Bound for Hamming Distance Approximation}


\lowerboundHamming*
\begin{proof}
    Let $n = 4k+1$. 
   Define $X =0^k(01)^k0^{k+1}$ to be the string of $k$ zeros followed by $k$ pairs of $01$ and ending with $k+1$ zeros. 
    Define $Y=0^{k+1}(01)^k0^{k}$ to be $k+1$ zeros followed by $k$ pairs of $01$ and ending with $k$ zeros. 
    These two strings have Hamming distance $2k = (n-1)/2$. 
    
    Differentiating between $X$ and $Y$ is equivalent to determining the number of $0$s at the beginning or end of them (as this is a promise problem). It is known that it requires $\Omega(k) = \Omega(n)$ traces to determine if the length of the 0-run at the beginning is even or odd with probability at least $2/3$ because the problem reduces to differentiating between two binomial distributions~\cite{BatuKannan04-RandomCase}. Therefore, with probability at least $1/3$, a reconstruction algorithm using fewer traces must output a string that is at least Hamming distance $k = (n-1)/4$ away from the actual string. 
\end{proof}

\section{Conclusion}

We studied the challenge of determining the relative trace complexity of approximate versus exact string reconstruction.
Outputting a string close to the original in edit distance with few traces is a central problem in DNA data storage that has gone largely unnoticed in lieu of exact reconstruction. We present algorithms for classes of strings, where these classes lend themselves to techniques in every theoretician's toolbox (e.g., concentration bounds, estimates from averages), while introducing new alignment techniques that may be useful for other algorithms. Additionally, these classes of strings are hard to reconstruct exactly (they contain the set of $n$-bit strings with Hamming weight $n-1$, which suffices to derive an $\Omega(n)$ lower bound on the trace complexity).

 We left open the intriguing question of whether $\varepsilon n$-approximate reconstruction is actually easier than exact reconstruction for all strings. On the other hand, we showed that it is easier for at least some strings. Our algorithms output a string within edit distance $\varepsilon n$ from the original string using $O(\log n / \varepsilon^2)$ traces for large classes of strings. In some cases, we showed how to approximately reconstruct with a single trace. We also presented lower bounds that interpolate between the hardness of approximate and exact trace reconstruction.

Algorithms with small sample complexity for the approximate trace reconstruction problem could also provide insight into exact solutions. 
If we know that the unknown string belongs to a specified Hamming ball of radius $k$, then one can recover the string exactly with $n^{O(k)}$ traces by estimating the histogram of length $k$ subsequences~\cite{krasikov1997reconstruction, KMMP19}.
It is an open question whether an analogous claim can be proven for edit distance~\cite{grigorescu2020limitations}. Do $n^{O(k)}$ traces suffice if we know an edit ball of radius $k$ that contains the string? If this is true, then an algorithm satisfying our notion of edit distance approximation would imply an exact reconstruction result. 

Approximate trace reconstruction is also a specialization of {\em list decoding} for the deletion channel, where the goal is to output a small set of strings that contains the correct one with high probability. We are not aware of any work on list decoding in the context of trace reconstruction, even though it seems like a natural problem to study. Using an approximate reconstruction algorithm, we could output the whole edit ball around the approximate string. For more on list decoding with insertions and deletions, see the work by Guruswami, Haeupler, and Shahrasbi and references therein~\cite{guruswami2020optimally}.

\section{Acknowledgments}

We thank Jo\~ao Ribeiro and Josh Brakensiek for discussions on coded trace reconstruction, as well as the anonymous reviewers for helpful feedback on an earlier version of the paper.

{
\footnotesize
\linespread{.95}
\bibliographystyle{alpha}
\bibliography{refs}
}
\newpage
\appendix

\section{Appendix}\apndlab{A}

The following are omitted proofs from our warm-up approximate reconstruction algorithms.

\subsection{Analysis of first warm-up algorithm}

\begin{proof}[Proof of {\propref{longruns}}] 
    It is straight-forward to check that if $X$ contains $k$ runs, then with probability at least $1-\frac{1}{n^2}$
    all $T = \frac{2}{p \varepsilon^2} \log(n)$ traces contain $k$ runs. 
    Next, we estimate the lengths of runs in~$X$. For traces $\widetilde{X}_1,\ldots, \widetilde{X}_T$, 
    label the runs in $\widetilde{X}_j$ as $\widetilde{\mathbf{r}}^j_1,\widetilde{\mathbf{r}}^j_2,\ldots,\widetilde{\mathbf{r}}^j_k$, and recall that $|{\mathbf{r}}_i|$ denotes the length of the $i$th run, ${\mathbf{r}}_i$, in $X$.
    For $\widetilde{\mu}_i = \sum_{j=1}^T \widetilde{\mathbf{r}}^j_i/T$, the scaled average
    $\frac{\widetilde{\mu}_i}{p}$ estimates $|{\mathbf{r}}_i|$ for $i \in [k]$. Applying a Chernoff bound and then a union bound, 
    $\P(\exists i : |\widetilde{\mu}_i/p-|{\mathbf{r}}_i|| \ge \epsilon |{\mathbf{r}}_i|)  \leq 2n^{-3}$.
    Let $\widehat{X}=\widehat{X}_1\cdots \widehat{X}_k$, where substring $\widehat{X}_i$ is a run with length $\frac{\widetilde{\mu}_i}{p}$ and bit value matching run $i$ of the traces. 
    We have seen that with probability at least $1-\frac{1}{n}$, for every $i\in [k]$ the edit distance between $\widehat{X}_i$ and ${\mathbf{r}}_i$ is at most $\varepsilon |{\mathbf{r}}_i|$. On this event,     $\widehat{X}$ has edit distance at most $\epsilon n$ from $X$, by \lemref{substringrecon}.
\end{proof}

We can also achieve slightly stronger guarantees. If the number of traces in \propref{longruns} is linear, then the algorithm actually reconstructs {\em exactly} with high probability.  Also, the output~$\widehat{X}$ from the algorithm for \propref{longruns} will approximately reconstruct strings that do not quite satisfy the current assumptions, as described in the premises of the following corollary.





\begin{restatable}[Robustness]{corollary}{longrunsrobust}
\corlab{longrunsrobust}
    Let $X$ be an $n$-bit string such that all runs have length at least $\log(n^5)$ except for at most $s$ runs. We can $\epsilon n$-approximately reconstruct $X$ with $ O(\log(n) /\varepsilon^2 \cdot (\frac{1}{p})^{s})$ traces.
\end{restatable}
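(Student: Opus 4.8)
Write $k$ for the number of runs of $X$, call the at most $s$ runs of length below $\log(n^5)$ the \emph{short} runs and the rest the \emph{long} runs. The only obstacle to directly invoking the procedure behind \propref{longruns} is that a short run can be fully deleted in a trace, merging its two neighbours and lowering the run count, so traces may disagree on their number of runs and the alignment step fails. The plan is therefore to oversample and then throw away the traces in which this happens. A long run survives a trace with probability at least $1-q^{\log(n^5)}=1-n^{-5}$, so by a union bound, with high probability no long run is ever fully deleted across all traces drawn. A short run (length $\ge 1$) survives a given trace with probability at least $p$; hence a single trace has \emph{every} run surviving --- call such a trace \emph{good} --- with probability at least $p^{s}/2$ for $n$ large. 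So if we draw $T'=\Theta\big(\log(n)/(\varepsilon^{2}p^{s})\big)$ traces, a Chernoff bound gives that at least $T=\Theta(\log(n)/\varepsilon^{2})$ of them are good with high probability.

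The key observation is that the good traces can be recognised without knowing which runs are short: deletions never create a run, so every trace has at most $k$ runs, with equality precisely when no run was deleted, i.e.\ precisely for the good traces. Thus, on the high-probability events above, the traces attaining the \emph{maximum} run count among the $T'$ samples are exactly the good ones, and there are at least $T$ of them. I would keep only these and run the \propref{longruns} algorithm on them: align the runs index-by-index (the $i$th run of each good trace has the parity of the $i$th run of $X$), set $\widetilde{\mu}_i$ to the average surviving length of the $i$th run over the good traces, and output a run of that parity and length $\widetilde{\mu}_i/p$.

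For the analysis, restricting to good traces conditions each run on surviving; for a long run this perturbs its $\mathrm{Bin}(|\mathbf{r}_i|,p)$ length distribution by total variation at most $n^{-5}$, so the Chernoff estimates of \propref{longruns} still yield $|\widetilde{\mu}_i/p-|\mathbf{r}_i||\le \tfrac{\varepsilon}{2}|\mathbf{r}_i|$ for every long run with high probability, using the $T=\Theta(\log n/\varepsilon^{2})$ good traces. For the short runs I would make no concentration claim: each outputs a run of length at most $\log(n^{5})/p=O(\log n)$ against a true run of length at most $\log(n^{5})$, so it contributes $O(\log n)$ to the edit distance, for a total of $O(s\log n)$ over all short runs. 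Combining with \lemref{substringrecon} gives $\de(X,\widehat{X})\le \tfrac{\varepsilon}{2}n+O(s\log n)\le \varepsilon n$, where the last step uses $s\log n=O(\varepsilon n)$ (implicit in the statement --- otherwise $X$ is essentially unconstrained and the claim is vacuous) and the factor of $2$ is absorbed into the constants exactly as in \propref{longruns}.

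I expect the main obstacle to be the bookkeeping around conditioning: arguing cleanly that selecting the traces of maximum run count really does isolate the good traces (so the selection rule is oblivious to which runs are short), and verifying that conditioning the long runs on survival does not damage the concentration estimates imported from \propref{longruns}. A secondary point is that the union bound ``no long run deleted in any trace'' requires $T'$ to be polynomially bounded, i.e.\ $s=O(\log n)$; for larger $s$ the stated bound carries no content.
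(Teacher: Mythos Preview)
Your proposal is correct and follows essentially the same approach as the paper: oversample by a factor of $p^{-s}$, observe that a trace has the maximum possible number of runs precisely when no run was fully deleted, keep only those traces, and then invoke \propref{longruns}. Your write-up is in fact more careful than the paper's own (terse) proof---you explicitly handle the conditioning on survival for long runs and separately bound the $O(s\log n)$ contribution from short runs, points the paper leaves implicit.
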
 
\begin{proof} 
    Taking $C = 8/p$, with probability $1-\frac{1}{n^3}$ every long run (those with length at least $\log(n^4)$) will not be entirely deleted, and
    with probability at least $p^{s}$ none of the $s$ short runs are entirely deleted.
    By a Chernoff bound, with probability at least $1-n^{-3}$ the number of traces where no short run is entirely deleted is at least $\frac{3}{\epsilon^2p}\log(n)$.
    We identify the traces with the maximum number of runs and then use the
    algorithm for \propref{longruns} using these traces.
\end{proof}

\subsection{Analysis of second warm-up algorithm}

\begin{proof}[Proof of {\propref{oneruns}}]
    Suppose that all of the $1$-runs of $X$ have length at least $\frac{6}{p\epsilon^2}\log(n)$. 
    Take a single trace $\widetilde{X}$. 
    By a Chernoff bound, with probability at least $1-n^{-2}$, every $0$-run from $X$ with length at least 
    $\frac{6}{p\epsilon}\log(n)$ will have length at least $L := \frac{\log(n)}{10 \varepsilon}$ in $\widetilde{X}$. 
    Find every $0$-run in $\widetilde{X}$ with length at least $L$ and index them as $\widetilde{\mathbf{r}}_1$,\ldots,$\widetilde{\mathbf{r}}_k$. 
    For $i \in [k-1]$, let $\widetilde{s}_i$ be the bits between the last bit of $\widetilde{\mathbf{r}}_{i}$ and the first bit of $\widetilde{\mathbf{r}}_{i+1}$ and let $\widetilde{s}_0$ be the bits before $\widetilde{\mathbf{r}}_{1}$ and $\widetilde{\mathbf{s}}_{k+1}$ the bits after $\widetilde{\mathbf{r}}_{k}$.
    Let ${\mathbf{s}}_i$ be the contiguous substring of $X$ from which $\widetilde{\mathbf{s}}_i$ came and ${\mathbf{r}}_i$ the contiguous substring of $X$ from which $\widetilde{\mathbf{r}}_i$ came. 
    For all $i$, we will approximate ${\mathbf{s}}_i$ with $\widehat{1}_i,$ a $1$-run of length $|\widetilde{s}_i|/p$, and ${\mathbf{r}}_i$ with $\widehat{0}_i$, a $0$-run of length $|\widetilde{\mathbf{r}}_i|/p$.
    
    Since ${\mathbf{s}}_i$ contains alternating $1$-runs with length at least $\frac{6}{p\epsilon^2}\log(n)$ and $0$-runs with length at most $\frac{6}{p\epsilon}\log(n)$,
    ${\mathbf{s}}_i$ has at least a $1-\epsilon$ density of $1$s. 
    By a Chernoff bound, $\P\left(\left|\frac{|\widetilde{\mathbf{s}}_i|}{p} - |{\mathbf{s}}_i|\right| \ge \epsilon |{\mathbf{s}}_i|\right) \leq n^{-2}$. 
    Therefore $\widehat{1}_i$ 
    and ${\mathbf{s}}_i$ have edit distance at most $2\epsilon |{\mathbf{s}}_i|$. 
    If $|{\mathbf{r}}_i| \ge \frac{6}{p\epsilon^2}\log(n)$, then, as before,
    by a Chernoff bound $\P\left(\left||{\mathbf{r}}_i| - \frac{|\widetilde{\mathbf{r}}_i|}{p}\right| \ge \varepsilon|{\mathbf{r}}_i|\right) \le n^{-2}$, and so $\widehat{0}_i$ has edit distance at most $2\epsilon |{\mathbf{r}}_i|$ from ${\mathbf{r}}_i$. 
   If $|{\mathbf{r}}_i| \le \frac{6}{p\epsilon^2}\log(n)$ then the approximation of $|\widetilde{\mathbf{r}}_i|/p$ $0$s  has edit distance at most $ \frac{6}{ p \varepsilon} \log(n)$ from ${\mathbf{r}}_i$ 
    with probability at least $1-n^{-2}$.

    Let $\widehat{X}=\widehat{1}_0\widehat{0}_1 \widehat{1}_1\cdots \widehat{1}_k\widehat{0}_k \widehat{1}_{k+1}$ and
    observe that the number of $0$-runs is at most  $\frac{p\epsilon^2 n}{6\log(n)}$, since there at most this many $1$-runs which separate $0$-runs. 
    Then applying \lemref{substringrecon}, we have with probability at least $1-1/n$ that
    \begin{align*}
    \de(X,\widehat{X}) &\leq \sum_{i=1}^{k} (\de(\widehat{0}_i, {\mathbf{r}}_i) + \de(\widehat{1}_i, {\mathbf{s}}_i)) + \de(\widehat{0}_0, {\mathbf{s}}_0)+\de(\widehat{1}_{k+1}, {\mathbf{s}}_{k+1}) \\
    &\leq
    \sum_{i=1}^k \left (2 \varepsilon |{\mathbf{r}}_i|+\frac{6}{p \varepsilon} \log(n) +2 \varepsilon |{\mathbf{s}}_i| \right )+2 \varepsilon |{\mathbf{s}}_0|+2 \varepsilon |{\mathbf{s}}_{k+1}|\\ 
     &\leq
    2 \varepsilon n+\frac{6}{p \varepsilon} \log(n)\cdot \frac{n}{6\log(n)/(p\epsilon^2)}   \le 3 \varepsilon n.
    \end{align*}
    The theorem follows by taking
    $\epsilon = \frac{\epsilon^*}{3}$. 
\end{proof}

\section{Chernoff-Hoeffding Bound}

In many proofs, we use the following concentration bound:

\begin{lemma}[Chernoff-Hoeffding bound]
Let $X_1,\ldots, X_n \in \{0,1\}$ be independent. Let $b_1,\ldots,b_n \geq 0$ with $b = \max\{b_i\}$. 
Then for $0<\delta <1$, $X = \sum_{i=1}^n b_i X_i$, and $\mu = \mathbb{E}[X]$ the following holds:
$$
\P\left( \left|X - \mu\right| \geq \delta \mu\right) \leq 2\exp(- \mu \delta^2/(3b)).
$$
\end{lemma}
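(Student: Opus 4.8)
The plan is to prove this by the classical exponential-moment (Chernoff) method, with one extra normalization step that reduces the weighted sum $X = \sum_i b_i X_i$ to the shape of an ordinary sum of independent Bernoullis. Write $p_i := \Pr(X_i = 1)$, so that $\mu = \sum_i b_i p_i$. Fixing $t > 0$ and using independence together with $1 + x \le e^x$, I would first record
$$\E[e^{tX}] \;=\; \prod_{i=1}^n \big(1 - p_i + p_i e^{t b_i}\big) \;\le\; \exp\!\Big(\sum_{i=1}^n p_i \big(e^{t b_i} - 1\big)\Big).$$

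The key observation is that, since $x \mapsto e^{tx} - 1$ is convex and vanishes at $x = 0$, on the interval $[0,b]$ it lies below its chord, so $e^{t b_i} - 1 \le \frac{b_i}{b}\big(e^{tb} - 1\big)$ for every $i$ (this is exactly where $0 \le b_i \le b$ is used). Substituting and using $\sum_i p_i \frac{b_i}{b} = \frac{\mu}{b}$ gives $\E[e^{tX}] \le \exp\!\big(\frac{\mu}{b}(e^{tb}-1)\big)$, which is precisely the moment generating function bound one would have for a sum of $\mu/b$ independent Bernoullis. From here the upper tail is the textbook computation: Markov's inequality gives $\Pr\big(X \ge (1+\delta)\mu\big) \le \exp\!\big(\frac{\mu}{b}(e^{tb} - 1 - tb(1+\delta))\big)$, and choosing $tb = \ln(1+\delta)$ yields $\Pr\big(X \ge (1+\delta)\mu\big) \le \exp\!\big(\frac{\mu}{b}(\delta - (1+\delta)\ln(1+\delta))\big)$; the elementary inequality $(1+\delta)\ln(1+\delta) - \delta \ge \delta^2/3$ for $0 \le \delta \le 1$ then bounds this by $\exp(-\mu\delta^2/(3b))$.

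For the lower tail I would repeat the argument for $-X$. For $t > 0$ the same steps give $\E[e^{-tX}] \le \exp\!\big(\sum_i p_i(e^{-t b_i}-1)\big)$, and since $x \mapsto e^{-tx} - 1$ is again convex and vanishes at $0$, $e^{-t b_i} - 1 \le \frac{b_i}{b}(e^{-tb}-1)$, so $\E[e^{-tX}] \le \exp\!\big(\frac{\mu}{b}(e^{-tb}-1)\big)$. Markov applied to $e^{-tX}$, the choice $tb = -\ln(1-\delta)$ (legitimate since $\delta < 1$), and the inequality $(1-\delta)\ln(1-\delta) + \delta \ge \delta^2/2 \ge \delta^2/3$ then give $\Pr\big(X \le (1-\delta)\mu\big) \le \exp(-\mu\delta^2/(3b))$. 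A union bound over the two tails yields $\Pr(|X - \mu| \ge \delta\mu) \le 2\exp(-\mu\delta^2/(3b))$, as claimed.

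The only genuinely non-routine step is the convexity/chord bound that replaces each $e^{\pm t b_i}-1$ by a linear function of $b_i$ and thereby collapses the weighted MGF into the unweighted one with effective count $\mu/b$; this is where both hypotheses ($b_i \in [0,b]$ and $\delta \in (0,1)$) enter. Everything after that is the standard Chernoff optimization together with the two scalar inequalities on $[0,1]$, which can be verified in the usual way (e.g. by comparing Taylor expansions or checking signs of derivatives), so I would not belabor them.
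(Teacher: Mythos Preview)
Your proof is correct and is the standard exponential-moment argument for the weighted Chernoff bound; the convexity/chord step reducing the weighted MGF to the unweighted one with effective count $\mu/b$ is exactly the right idea, and the scalar inequalities you invoke on $[0,1]$ are well known. Note, however, that the paper does not actually prove this lemma: it is stated in the appendix as a standard concentration inequality used throughout, without any argument given. So there is nothing to compare against; your write-up simply supplies a complete proof where the paper cites the result as folklore.
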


\end{document}